\documentclass[screen]{acmart}

\usepackage{amsmath}
\usepackage{bm}
\usepackage{booktabs}
\usepackage{array}
\usepackage{tabularx}
\usepackage{xspace}
\usepackage[compat=0.6]{yquant}
\useyquantlanguage{groups}
\usepackage{makecell}
\usepackage{subcaption}

\acmJournal{TQC}
\setcopyright{none}

\newtheorem{definition}{Definition}
\newtheorem{theorem}{Theorem}
\newtheorem{lemma}{Lemma}

\newtheorem{corollary}{Corollary}

\newcommand{\Ftwo}{\mathbb{F}_{2}}
\newcommand{\GL}{\operatorname{GL}}
\renewcommand{\O}{O}
\newcommand{\cnot}{\mathrm{CNOT}}
\newcommand{\NP}{\ensuremath{\mathsf{NP}}}
\newcommand{\MinCNOTv}{\ensuremath{\mathsf{MinCNOT}_{\mathrm{vanilla}}}\xspace}
\newcommand{\GridHP}{\textsf{Grid-HP}}
\renewcommand{\v}[1]{\boldsymbol{#1}}
\DeclareMathOperator{\rank}{rank}
\newcommand\<\langle
\renewcommand\>\rangle

\newenvironment{problemstatement}[1]{%
  \par
  \noindent\hspace*{\parindent}%
  \tabularx{\dimexpr\linewidth-\parindent\relax}{@{}>{\scshape}l@{\quad}X@{}}
    Problem: & #1\\
}{%
  \endtabularx
  \par
}
\newcommand{\probleminstance}[1]{Instance: & #1\\}
\newcommand{\problemquestion}[1]{Question: & #1\\}

\title{Vanilla Exact Synthesis of CNOT Circuits is NP-hard}
\titlenote{Dated: September 4, 2026; updated: September 27, 2026}

\author{Chenjian Li}
\affiliation{%
  \institution{Institute of Software, Chinese Academy of Sciences}
  \institution{and University of Chinese Academy of Sciences}
  \city{Beijing}
  \country{China}}

\author{Ji Guan}
\affiliation{
  \institution{Institute of Software, Chinese Academy of Sciences}
  \city{Beijing}
  \country{China}}

\begin{abstract}

Exact CNOT synthesis seeks a minimum-size circuit implementing a given invertible binary linear transformation. We study its vanilla formulation: synthesis from the identity on fixed labeled qubits, without ancillas and with all-to-all connectivity.
We prove that this problem is \NP-hard and that its bounded decision version is \NP-complete. Hardness persists even for unitriangular targets that are low-rank perturbations of the identity and under near-linear gate budgets.

Our proof gives a polynomial-time reduction from the \NP-complete Hamiltonian path problem on grid graphs. A unary hypercube embedding translates graph vertices into circuit parities and edge traversals into CNOT updates.
The central challenge is that Hamiltonian paths require intermediate vertex visits, whereas exact synthesis constrains only the final transformation. We bridge this gap with a replicated recorder qubits construction that encodes the required parities in the final outputs, together with a tight gate budget that forces every sufficiently short implementation to trace a Hamiltonian path. Crucially, this path structure is enforced by the construction rather than imposed as a restriction on the circuit.

The result directly implies \NP-hardness for shortest word problem and Cayley-graph distance computation over $\mathrm{GL}(n,2)$ with elementary transvections as generators, as well as fixed-width sequential XOR program minimization. For CNOT-minimal exact phase polynomial synthesis, it yields \NP-hardness with a zero phase polynomial; extending the reduction to grid-graph Hamiltonian cycle problem establishes hardness even when the final linear transformation is the identity. These complementary results show that the linear and phase components each independently suffice for \NP-hardness of exact phase polynomial synthesis.

\end{abstract}

\ccsdesc[500]{Theory of computation~Quantum computation theory}
\ccsdesc[300]{Hardware~Quantum computation}
\ccsdesc[300]{Hardware~Logic synthesis}

\keywords{CNOT synthesis, exact synthesis, linear reversible circuits,
NP-completeness, parity networks, Hamiltonian path}

\begin{document}

\maketitle

\section{Introduction}
\label{sec:introduction}

The Controlled-NOT (CNOT) gate is one of the most important gates in quantum circuits and quantum computation. While single-qubit gates alone cannot synthesize general multi-qubit unitaries, adding CNOT gates makes universal quantum computation possible. As two-qubit gates typically have substantially higher error rates than single-qubit gates in the current NISQ era~\cite{nisq,zuchongzhi,willow,na}, its optimization is also particularly important for quantum hardware. Reducing the number of CNOT gates is therefore a key objective in quantum circuit synthesis and optimization,
%
%
and has been extensive studied.
Patel, Markov, and Hayes~\cite{pmh_optimal_linear_synthesis} first gave a block Gaussian elimination method (block GEM) that achieves the asymptotically optimal worst-case CNOT count $O(n^2/\log n)$ of CNOT circuits, where $n$ represents the number of qubits in the circuit. Subsequent work further improved practical synthesis through greedy algorithms~\cite{debrugiere2021Gaussian}, A* search~\cite{CNOT_opt_UCL,chen2025phasepoly}, and database-based method~\cite{CNOT_opt_group_theoretic}.

It is not until recent that the research community starts to focus on instance-wise optimality of CNOT synthesis, which seeks to find exactly the circuit(s) containing the minimum number of CNOT gates that implements a specified target linear transformation. Following the terminology of classical logic synthesis community, we will also refer to the instance-wise optimal synthesis of CNOT circuits as the \emph{exact synthesis} of CNOT circuits, or \emph{exact CNOT synthesis}.
Shaik and van de Pol encoded topology-aware exact CNOT synthesis into SAT instances and solved them using general-purpose SAT solvers~\cite{shaik2024Optimal}. More recently, Lin-search and the semi-tensor-product (STP) solver have attacked unrestricted exact CNOT synthesis using specialized search procedures, pruning rules, algebraic structure, and parallelization~\cite{li2026linsearch,li2026parallelizable}.  Despite these substantial domain-specific optimizations, the runtime cost of these exact synthesis methods grows exponentially with the problem size and the circuit length. This empirical behavior strongly suggests that exact CNOT synthesis is computationally intractable, but it does not by itself establish a complexity-theoretic hardness result.

Existing complexity results provide further evidence that exact CNOT synthesis is computationally difficult, but only for models with additional structure or assumptions. Kang showed that exact CNOT synthesis under restricted qubit connectivity is \NP-hard via a reduction from \textsf{VertexCover}~\cite{kang2023cnot}. Amy, Azimzadeh, and Mosca proved \NP-hardness for fixed-target parity networks (namely CNOT circuits) and for parity networks with linearly encoded inputs, the latter via a reduction from maximum-likelihood syndrome decoding~\cite{amy2018CNOTcomplexity}. A closely related problem, XOR minimization, has also been studied in the context of cryptography: Boyar, Matthews, and Peralta showed that the XOR minimization problem (\textsf{Shortest Linear Program}) is \NP-hard~\cite{boyar2013logic}; but such XOR programs may introduce arbitrary intermediate variables, whereas a CNOT circuit is a fixed-width, reversible computation in which the number of live variables remains unchanged. Hence, this hardness result does not directly apply to exact CNOT synthesis.

Taken together, these results provide strong evidence that exact CNOT synthesis is intrinsically difficult, but they leave an important conceptual gap.  Existing hardness proofs rely on additional structures---restricted connectivity, encoded inputs, or extra intermediate XOR variables---in the reduction, which are absent from the most elementary CNOT circuit model.  They therefore cannot rule out the possibility that the observed hardness originates from these auxiliary restrictions or resources rather than from the combinatorial complexity of CNOT synthesis itself.  The most vanilla formulation of exact CNOT synthesis asks for a minimum-size CNOT circuit on a fixed number of qubits (or variables), starting from the identity, using no ancillary qubits, and allowing a CNOT between every pair of qubits.  The complexity of this unrestricted in-place exact synthesis problem remained unresolved for many years after the question was raised by Amy et al.~\cite{amy2018CNOTcomplexity}.

In this work, we show that this unconditioned exact CNOT synthesis problem, which we denote by $\MinCNOTv^{opt}$, is indeed \NP-hard, and that its bounded decision version \MinCNOTv is \NP-complete.  Our proof gives a polynomial-time reduction from the \emph{Hamiltonian path problem on grid graphs} (\GridHP), which has been shown to be \NP-complete by Itai et al.~\cite{itai1982hamilton}.  Table~\ref{tab:hardness_comparison} summarizes our result together with previous hardness results for related CNOT and XOR synthesis models, and our result is shown as follows:

\vspace{1em}
\begin{problemstatement}{Vanilla CNOT minimization of CNOT circuits (\MinCNOTv)}
  \probleminstance{A number of qubits $n$, a target parity matrix $A\in\GL(n,2)$, and a nonnegative integer $k$.}
  \problemquestion{Provided that CNOT gates are available between any pair of qubits, does there exist an $n$-qubit CNOT circuit $C$ with length at most $k$ that realizes the linear transformation $A$?}
\end{problemstatement}
\vspace{-.6em}
\begin{theorem}
  \MinCNOTv is \NP-complete, and hence its optimization version $\MinCNOTv^{opt}$ is \NP-hard.
\end{theorem}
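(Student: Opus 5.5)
Membership in \NP{} comes first. If $k \ge n^2$, the instance is always a yes-instance, since Gaussian elimination realizes any $A\in\GL(n,2)$ with $O(n^2)$ CNOTs. We can therefore assume $k$ is polynomially bounded, or simply output yes when $k$ exceeds the Gaussian-elimination bound. A certificate is then a list of at most $\min(k,n^2)$ gates $(c_i,t_i)$. It is verified by multiplying the elementary transvections $E_{t_i c_i}$ over $\Ftwo$ and comparing the product with $A$, which takes polynomial time. \NP-hardness of the optimization version follows immediately once the decision version is \NP-hard.

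For hardness we reduce from \GridHP. Let $G=(V,E)$ be a grid graph with $|V|=N$ and designated endpoints $s,t$; the grid coordinates are polynomial in $N$. We encode each vertex $v=(x,y)$ in unary as a parity vector on ``position'' qubits. We use a block of $X$ qubits whose prefix of length $x$ is set, together with a block of $Y$ qubits for $y$. Moving to a grid neighbour then changes exactly one bit. If a ``walker'' qubit $w$ holds the parity $p(v)$ (the XOR of $e_w$ with the unary encoding), a single CNOT from a suitable coordinate qubit into $w$ implements one step, with the coordinate qubits arranged so that incrementing or decrementing one coordinate adds exactly one basis vector. This is the hypercube embedding: vertices become points of $\Ftwo^{m}$ with $m=O(N)$, and grid edges become differences of Hamming weight one.

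Intermediate visits are forced with recorder qubits. For each vertex $v$ we add a recorder $r_v$ whose target output row is $e_{r_v}+p(v)$. The cheapest way to produce this is a single CNOT from $w$ into $r_v$ at the moment $w$ holds $p(v)$. Each recorder is replicated $L$ times, with $L$ polynomially large, so that any alternative way of building $p(v)$ must be paid $L$ times. It is cheaper to construct $p(v)$ once on the walker and copy it $L$ times than to do anything else. The target $A$ is unitriangular, equal to the identity plus these recorder rows of rank $O(N)$, with the walker ending at $p(t)$. The budget is $k = NL + (N-1) + c$, where $c$ is a small correction for restoring auxiliary qubits.

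Completeness is direct: a Hamiltonian path gives a circuit of exactly $k$ gates. Soundness is the main obstacle. We must show that any circuit with at most $k$ gates has this form. The plan has three steps. First, a counting and rank argument shows that each of the $NL$ recorder rows needs at least one gate targeting it, which leaves only $N-1+c$ other gates. Second, by replication, an averaging argument shows that for each $v$ some register ``essentially'' holds $p(v)$ at some time, meaning it is copied at least once into a recorder. Third, and most delicately, we must show that the remaining budget of $N-1+c$ non-recorder gates can realize all $N$ distinct parities $p(v)$ only by consecutive weight-one moves along grid edges. We would first try a potential argument. Every parity visited lies in the span reachable by the non-recorder gates, and $N$ distinct unary points reachable with $N-1$ single updates must form a walk. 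That walk is a path whose consecutive points differ in one coordinate unit, so it is a Hamiltonian path in $G$. The risk is that recorders could act as controls and shortcut the structure, or that several working registers could share the load. Bounding these effects, likely by showing that any such use costs at least $L$ extra gates, is where most of the work lies.
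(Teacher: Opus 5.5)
Your construction is essentially the paper's: a unary hypercube embedding, a walker, coordinate qubits, and recorder groups replicated $L$ times with target rows $e_r+p(v)$. Your \NP-membership argument via an explicit $n^2$ elimination bound is fine.

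\textbf{The gap is soundness.} You leave it as a plan, and the mechanism you suggest (showing that each shortcut ``costs at least $L$ extra gates'') is not what closes it. What is missing is a zero-slack accounting with budget exactly $k=NL+N-1$ and $L\ge N$. The paper's argument runs as follows.
\begin{enumerate}
  \item Let $\mathcal G_{\rm record}$ be the last gate on each recorder ($NL$ gates) and $\mathcal G_{\rm parity}$ the rest ($\le N-1$ gates).
  \item If every recorder of group $v$ were targeted at least twice, the cost would be at least $NL+L>k$. So some recorder of group $v$ is targeted exactly once, and its control then holds exactly $p(v)$. Hence every $p(v)$ occurs.
  \item The first occurrences of the $N-1$ non-initial parities need $N-1$ distinct gates. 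None of these lies in $\mathcal G_{\rm record}$, whose outputs contain a recorder variable. So every gate of $\mathcal G_{\rm parity}$ outputs a fresh, recorder-variable-free $p(v)$.
  \item It follows that non-recorder qubits never acquire recorder variables.
  \item A parity gate targeting a recorder would leave that recorder free of recorder variables. The projections of all qubits onto the $NL$ recorder variables would then span at most $NL-1$ dimensions, contradicting invertibility.
  \item A parity gate targeting a coordinate qubit could never be undone, since no remaining gate outputs $x_j$.
  \item So all $N-1$ parity gates target the walker. Their controls cannot be recorders, since a recorder always carries its own variable and would inject it. The controls are therefore untouched coordinate qubits.
  \item Consecutive walker values differ by a single $x_j$, which yields a Hamiltonian path.
\end{enumerate}
This disposes of both risks you flag (recorders as controls, several working registers) without any ``extra $L$'' argument.

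\textbf{Two details of your setup also need repair.}

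First, the slack $c$ must be $0$; there are no auxiliary qubits to restore. Any $c\ge 1$ makes the reduction unsound. Take the T-shaped grid graph on $(0,0),(1,0),(2,0),(1,1)$ with $s=(0,0)$ and $t=(2,0)$. It has no Hamiltonian $s$--$t$ path, yet recording along the walk $s,(1,0),(1,1),(1,0),t$ gives a valid circuit with $NL+N$ gates.

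Second, the walker initially holds $e_w$, i.e.\ the parity of the all-zero encoding. You must therefore translate all encodings by $\mathrm{enc}(s)$, which is a cube isometry, so that $s\mapsto\mathbf 0$. Relatedly, your claim that the coordinates are polynomial in $N$ needs the remark that disconnected instances are rejected outright; after that, the coordinate spread is at most $N-1$.
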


\begin{table}[t]
\centering
\caption{Hardness results related to exact CNOT and XOR synthesis.}
\label{tab:hardness_comparison}

\begin{tabular}{c|c|c}
\toprule
\textbf{Work} &
\textbf{Additional structure/assumption} &
\textbf{Reduced from} \\
\midrule

\makecell{Kang and Ma~\cite{kang2023cnot}\\
(\textsf{Topology-Constrained CNOT Synthesis})}
&
\makecell{Restricted qubit\\connectivity}
&
\makecell{\textsf{Vertex Cover}}
\\
\hline
\makecell{Boyar et al.~\cite{boyar2013logic}\\
(\textsf{Shortest Linear Program})}
&
\makecell{Arbitrary intermediate\\XOR variables}
&
\makecell{\textsf{Vertex Cover}}\\
\hline
\makecell{Amy et al.~\cite{amy2018CNOTcomplexity}\\
(\textsf{Fixed-Target Parity Network})}
&
\makecell{Fixed CNOT target}
&
\makecell{\textsf{Hamming}\\ \textsf{Traveling Salesman}}
\\
\hline
\makecell{Amy et al.~\cite{amy2018CNOTcomplexity}\\
(\textsf{Encoded-Input Parity Network})}
&
\makecell{Encoded inputs}
&
\makecell{\textsf{Maximum-Likelihood}\\\textsf{Syndrome Decoding}}
\\

\hline

\makecell{\textbf{Ours}\\
(\textsf{Vanilla Exact CNOT Synthesis})}
&
\textbf{None}
&
\makecell{\textsf{Grid-Graph} \\ \textsf{Hamiltonian Path}}
\\

\bottomrule
\end{tabular}
\end{table}

Our proof proceeds in two steps. We first embed the grid graph onto a hypercube $\{0,1\}^d$ isometrically. We then reduce the Hamiltonian path problem on the hypercube to an exact CNOT synthesis problem instance, where vertices of the hypercube correspond to parities in a CNOT circuit, and edges correspond to CNOT gates. The main obstacle is that a Hamiltonian path requires every vertex to be visited intermediately, whereas exact CNOT synthesis constrains only the final matrix and cannot directly specify intermediate parities. We overcome this challenge by introducing a group of \emph{recorder qubits}, whose final outputs are required to contain (hence ``record'') the intermediate parities (namely vertices in the hypercube) that the circuit must reach. Together with a carefully designed tight gate budget, we show that a Hamiltonian path can be extracted from any CNOT circuit that both implements the prescribed final linear transformation and is sufficiently short, thus completing the reduction.

To summarize, the main contributions of our work are as follows:

\begin{enumerate}
  \item We prove that exact CNOT synthesis is unconditionally \NP-hard via a reduction from \GridHP. To derive this result, we establish a reduction framework that combines the Hamiltonian path problem on grid graphs with a tight gate budget that forces the synthesized circuit to exhibit certain structures.

  \item We further prove the \NP-hardness of the shortest word problem and the Cayley graph distance problem over $\GL(n,2)$, as well as the minimization of s-XOR programs. These results demonstrate the broader implications of our hardness result for group theory and linear cryptography.

  \item By adapting the proof strategy within our reduction framework, we also prove that exact synthesis of phase polynomial circuits is \NP-hard, even when the prescribed final linear transformation is trivial or the phase polynomial is trivial. These results reveal two complementary and independently sufficient sources of hardness: either the target linear transformation or the phase polynomial alone can render exact synthesis of phase polynomial circuits intractable.
\end{enumerate}

Until very recently, the complexity of unrestricted, all-to-all CNOT exact synthesis remained open.  While preparing this manuscript, we noted the concurrent independent work by Acuaviva et al.~\cite{acuaviva2026cnot}, which established \NP-completeness for the same setting via a reduction from \textsf{VertexCover}.  Our result provides an independent and complementary proof based on a structurally different reduction from \GridHP, revealing a direct correspondence between CNOT parity trajectories and Hamiltonian paths in induced hypercube subgraphs.

The remaining part of the manuscript is organized as follows. In Sec.~\ref{sec:background}, we give a brief introduction to the related topics, including CNOT circuits, exact CNOT synthesis, and the grid-graph Hamiltonian path problem. In Sec.~\ref{sec:reduction}, we present the reduction from \textsf{GridGraphHamilPath} to \MinCNOTv, proving the hardness of $\MinCNOTv$ and $\MinCNOTv^{opt}$, and further strengthen the hardness result under several structural restrictions on the target matrix and gate budget. Finally, in Sec.~\ref{sec:discussion} we discuss the broader complexity implications of our result, including shortest word problem over $\GL(n,2)$ and the related Cayley-graph distance problem, minimization of s-XOR programs. In particular, the hardness result for exact synthesis of phase polynomial circuits is discussed in Sec.~\ref{sec:phase_polynomial}.

\section{Background}
\label{sec:background}

\subsection{CNOT circuits and exact synthesis}
The CNOT gate, or the Controlled-NOT gate, is a two-qubit gate, which flips the target qubit (second qubit) if the control qubit (first qubit) is set:
\begin{align}
        &\cnot|00\>=|00\>, &\cnot|01\>=|01\>,\\
        &\cnot|10\>=|11\>, &\cnot|11\>=|10\>.
\end{align}
Alternatively, CNOT gates can be characterized as an addition-modulo-2 operation over the binary field $\Ftwo=\{0,1\}$:
    \begin{equation}
        \cnot|x_cx_t\>=|x_c(x_c\oplus x_t)\>
    \end{equation}
where $x_c,x_t\in\{0,1\}$, and ``$\oplus$'' denotes the addition modulo 2. As a result, a CNOT gate is \textit{linear}, in the sense that it satisfies the relation $f(x_1\oplus x_2)=f(x_1)\oplus f(x_2)$ for $x_1,x_2\in\mathbb F_2$ where $f(x)$ denotes $\cnot|x\>$.

We will call circuits that only contain CNOT gates \emph{CNOT circuits}. Although a quantum circuit generally requires a $2^n\times 2^n$ unitary matrix for its representation, a CNOT circuit implements a linear transformation over all $n$ input qubit variables, and can be compactly represented by an $n\times n$ parity matrix $A$. For example, a single CNOT gate can be represented as $\cnot_{1,2}\sim\left[\begin{smallmatrix} 1 & 0 \\ 1 & 1
    \end{smallmatrix}\right]$.
Meanwhile, CNOT gates are reversible operators, so CNOT circuits are also reversible, and therefore $A$ is invertible and belongs to the general linear group over $\Ftwo$: $A\in\GL(n,2)$.

\begin{example}\label{example:pmh_linear_circ}
    Consider the following CNOT circuit that was first introduced in~\cite{pmh_optimal_linear_synthesis}. Denote the input variables by $x_1$ to $x_4$. Then the circuit
    \vspace{-.8em}
    \begin{figure}[H]
    \centering
    \begin{tikzpicture}
    \begin{yquant}[operator/separation=3mm]
        qubit {} q[4];
        init {$x_1$} q[0];
        init {$x_2$} q[1];
        init {$x_3$} q[2];
        init {$x_4$} q[3];
        cnot q[1]|q[0];
        cnot q[3]|q[2];
        cnot q[2]|q[1];
        cnot q[1]|q[2];
        cnot q[0]|q[1];
        cnot q[3]|q[2];
        output {$x_1\oplus x_3$} q[0];
        output {$x_3$} q[1];
        output {$x_1\oplus x_2\oplus x_3$} q[2];
        output {$x_1\oplus x_2\oplus x_4$} q[3];
      \end{yquant}
    \end{tikzpicture}
    \label{qcirc:GHZ_prepare}
    \end{figure}
    \vspace{-1.2em}
    implements the following linear parity matrix:
\begin{equation}
    A=\begin{bmatrix}
        1 & 0 & 1 & 0 \\
        0 & 0 & 1 & 0 \\
        1 & 1 & 1 & 0 \\
        1 & 1 & 0 & 1 \\
    \end{bmatrix}.\label{eq:A_eg}
\end{equation}
\end{example}

At any point during the computation of a CNOT circuit, a qubit may contain a linear combination of the input variables such as $x_1\oplus x_3$. Following Amy's~\cite{amy2018CNOTcomplexity} notation, such expressions are called \emph{parities} and are denoted by
\begin{equation}
  \chi_{\v a}(\v x)
  := \v a^{\mathsf T}\v x
  = \bigoplus_{j=1}^{n} a_j x_j,
  \qquad \v a\in\Ftwo^n,
\end{equation}
where $\v a$ is the coefficient vector.

While CNOT circuits can be synthesized easily using Gaussian elimination methods, finding their smallest equivalent implementation, or exact CNOT synthesis, is much more challenging. The exact CNOT synthesis problem, and its bounded decision version, can be defined as follows:


\vspace{1em}
\begin{problemstatement}{Vanilla exact CNOT synthesis ($\MinCNOTv^{opt}$)}
  \probleminstance{A number of qubits $n$, a target parity matrix $A\in\GL(n,2)$.}
  \problemquestion{Provided that CNOT gates are available between any pair of qubits, what is a smallest CNOT circuit that implements the linear transformation $A$?}
\end{problemstatement}

\vspace{.6em}

\begin{problemstatement}{Vanilla minimization of CNOT circuit (\MinCNOTv)}
  \probleminstance{A number of qubits $n$, a target parity matrix $A\in\GL(n,2)$, and a nonnegative integer $k$.}
  \problemquestion{Provided that CNOT gates are available between any pair of qubits, does there exist an $n$-qubit CNOT circuit $C$ with length at most $k$ that realizes the linear transformation $A$?}
\end{problemstatement}
\vspace{.9em}

Apparently, $\MinCNOTv^{opt}\ge_{\rm poly} \MinCNOTv$. So it suffices to show the hardness of \MinCNOTv to demonstrate the hardness of the circuit synthesis problem. Furthermore, we can define the CNOT length of the parity matrix $A$ to be the smallest number of CNOT gates required to implement it:

\begin{definition}[CNOT length]
For $A\in\GL(n,2)$, its CNOT length is defined as
\begin{equation}
  \#\cnot(A)=\min\big\{|C|:
    C\,\, \text{\rm   is an ancilla-free $n$-qubit CNOT circuit that implements } A
  \big\}.
\end{equation}
\end{definition}

\begin{example}
  As shown in~\cite{li2026linsearch}, the smallest CNOT circuits implementing Eq.~\eqref{eq:A_eg} require only 5 CNOT gates. As a result, the CNOT length of~\eqref{eq:A_eg} is $\#\cnot(A)=5$. The smallest implementations are shown as follows:

\begin{figure}[H]
\centering
\begin{tikzpicture}
  \begin{yquantgroup}[operator/separation=2.5mm]
     \registers{
       qubit {} q[4];
     }
     \circuit{
       cnot q[1]|q[0];
       cnot q[2]|q[1];
       cnot q[3]|q[1];
       cnot q[2]|q[3];
       cnot q[0]|q[1];
     }
     \equals[\phantom{=}]
     \circuit{
       cnot q[1]|q[0];
       cnot q[3]|q[1];
       cnot q[0]|q[2];
       cnot q[2]|q[1];
       cnot q[1]|q[2];
     }
  \end{yquantgroup}
\end{tikzpicture}
\end{figure}
\end{example}

\subsection{Grid graphs and induced hypercube graphs}

\paragraph{Grid graphs.} A grid graph is a graph that takes grid points on a 2D integer grid as its vertices. Furthermore, vertices in a grid graph are connected if they correspond to neighboring grid points (Manhattan distance is one). That is, a grid graph contains exactly the unit edges between its vertices. Fig.~\ref{fig:grid_hamiltonian_path} gives an illustration of the grid graph, and grid graphs can be formally defined as follows:

\begin{definition}[Grid graph]
  A grid graph is a graph $G=(V_G, E_G)$ where $V_G\subseteq \mathbb Z^2$, and $E_G=\{(\v u,\v v): \v u,\v v\in V_G\text{ and }|\v u-\v v|=1\}$, where $|(x_u,y_u)-(x_v,y_v)|=|x_u-x_v|+|y_u-y_v|$ is the Manhattan distance.
\end{definition}

Moreover, we can consider the Hamiltonian path/cycle problems on grid graphs. Given distinct vertices $\v s$ and $\v t$, the Hamiltonian path problem on grid graphs ($\GridHP$) asks whether the grid graph has a Hamiltonian path from $\v s$ to $\v t$; similarly, the Hamiltonian cycle problem on grid graphs (\textsf{Grid-HC}) asks whether the grid graph has a Hamiltonian cycle.  Both problems have been shown by Itai et al. to be \NP-complete~\cite{itai1982hamilton}. \GridHP{} is used to show the hardness of exact CNOT synthesis (Theorem~\ref{thm:main}), whereas \textsf{Grid-HC} is used to show the hardness of exact phase polynomial synthesis when the target linear transformation is trivial (Theorem~\ref{thm:checkpoint_identity}).

\vspace{1em}
\begin{problemstatement}{Grid graph Hamiltonian path problem (\GridHP)}
  \probleminstance{A grid graph $G=(V_G,E_G)$, a starting point $\v s$ and ending point $\v t$.}
  \problemquestion{Does a Hamiltonian path of $G$ exist, with $\v s$ and $\v t$ being its endpoints?}
\end{problemstatement}
\vspace{1em}

\begin{figure}[t]
  \centering
  \begin{tikzpicture}[
    x=1.25cm,
    y=1.25cm,
    ambient edge/.style={gray!60,densely dotted,line width=0.65pt},
    graph edge/.style={black,line width=0.9pt},
    hamilton edge/.style={red!80!black,line width=2.2pt,
      line cap=round,line join=round},
    ambient vertex/.style={fill=gray!55,circle,inner sep=1.7pt},
    graph vertex/.style={fill=black,circle,inner sep=2.1pt}
  ]
    \foreach \y in {0,1,2}
      \draw[ambient edge] (-0.55,\y)--(2.55,\y);
    \foreach \x in {0,1,2}
      \draw[ambient edge] (\x,-0.55)--(\x,2.55);
    \foreach \x in {0,1,2}
      \foreach \y in {0,1,2}
        \node[ambient vertex] at (\x,\y) {};

    \foreach \xa/\ya/\xb/\yb in {
      0/0/1/0,1/0/2/0,
      0/1/1/1,1/1/2/1,
      0/2/1/2,
      0/0/0/1,0/1/0/2,
      1/0/1/1,1/1/1/2,
      2/0/2/1}
      \draw[graph edge] (\xa,\ya)--(\xb,\yb);

    \draw[hamilton edge]
      (0,2)--(0,1)--(0,0)--(1,0)--(2,0)--(2,1)--(1,1)--(1,2);

    \foreach \x/\y in {0/0,1/0,2/0,0/1,1/1,2/1,0/2,1/2}
      \node[graph vertex] at (\x,\y) {};
    \node[above left=1pt] at (0,2) {$s$};
    \node[above left=1pt] at (1,2) {$t$};

    \begin{scope}[shift={(3.15,1.65)}]
      \draw[ambient edge] (0,0)--(0.75,0);
      \node[ambient vertex] at (0.375,0) {};
      \node[anchor=west] at (0.95,0) {ambient grid};

      \draw[graph edge] (0,-0.55)--(0.75,-0.55);
      \node[graph vertex] at (0.375,-0.55) {};
      \node[anchor=west] at (0.95,-0.55) {grid graph $G$};

      \draw[hamilton edge] (0,-1.10)--(0.75,-1.10);
      \node[anchor=west] at (0.95,-1.10) {Hamiltonian path};
    \end{scope}
  \end{tikzpicture}
  \caption{Illustration of a grid graph $G$ over the 2D grid, together with an illustrative Hamiltonian path.}
  \Description{A finite square-grid window containing an induced grid graph.
  Vertices outside the graph and excluded grid edges are gray and dotted,
  vertices and edges of the graph are black, and a Hamiltonian path from s to t
  is highlighted by thick red edges.}
  \label{fig:grid_hamiltonian_path}
\end{figure}

\paragraph{Hypercubes.} The $d$-dimensional hypercube graph $Q_d$, or $d$-dimensional hypercube for short, is a graph with vertex set $\Ftwo^d$. We also denote $|\v u-\v v|$ for the Manhattan distance between $d$-dimensional vectors, which happens to be the Hamming distance between hypercube vertices $\v u$ and $\v v$.  Thus two cube vertices are adjacent (and connected) when $|\v u-\v v|=1$. Furthermore, for $V\subseteq\Ftwo^d$, let $Q_d[V]$ denote the subgraph induced by $V$.
\begin{definition}[Induced hypercube subgraphs]
  For a vertex set $V\subseteq\Ftwo^d$, a hypercube subgraph induced by $V$ is defined as $Q_d[V]=(V, E)$ where $E=\{(\v u,\v v): \v u,\v v\in V\text{ and } |\v u-\v v|=1\}$, where $|\v u-\v v|$ is the Hamming distance.
\end{definition}

\section{The Main Hardness Result}
\label{sec:reduction}
In this section we prove the \NP-completeness of \MinCNOTv by reducing the grid graph Hamiltonian path problem to it. As a corollary, its optimization version $\MinCNOTv^{opt}$ is naturally \NP-hard. Then in Sec.~\ref{sec:strengthen}, we further adapt the reduction to strengthen the hardness results, showing that the hardness holds for a class of considerably simpler target linear transform, or even under when the synthesized circuit is not too deep.

\begin{theorem}
\label{thm:main}
$\MinCNOTv$ is \NP-complete.
\end{theorem}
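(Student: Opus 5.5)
Membership in \NP{} is immediate. A certificate is a list of at most $k$ CNOT gates. We may assume $k\le n^2$, since Gaussian elimination implements any $A\in\GL(n,2)$ with $O(n^2)$ gates; if $k$ is larger the answer is trivially yes. The verifier multiplies the elementary matrices in $O(kn^2)$ time and compares the product with $A$.

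For hardness I would reduce from \GridHP. Let $(G,\v s,\v t)$ have $N=|V_G|$ vertices.

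\textbf{Step 1: embedding.} Translate $G$ so that all coordinates lie in $[0,m]$ with $m=O(N)$. Map $(x,y)$ to the unary code $1^x0^{m-x}\,1^y0^{m-y}\in\Ftwo^d$ with $d=2m$. This embedding is isometric: Manhattan distance becomes Hamming distance. Hence $G$ is isomorphic to $Q_d[V]$, where $V$ is the image of $V_G$.

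\textbf{Step 2: the circuit instance.} Use $d$ ``data'' qubits together with a \emph{walker} qubit $w$ that starts holding the parity $\chi_{\v s}$. Concretely, one qubit carries $x_0$, and the walker's parity is read as $\v s$ plus the data coefficients. One CNOT from a data qubit $j$ into $w$ flips a single hypercube coordinate, so it moves the walker along a hypercube edge.

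To force intermediate visits, add for each $\v v\in V\setminus\{\v s\}$ a \emph{recorder} qubit $r_{\v v}$. The target matrix requires $r_{\v v}$ to end with $x_{r_{\v v}}\oplus\chi_{\v v}$, i.e.\ its own variable plus the parity $\v v$. It also requires $w$ to end at $\chi_{\v t}$, and every data qubit to return to identity.

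The intended circuit walks the Hamiltonian path. It uses $N-1$ walk CNOTs and one record CNOT $w\to r_{\v v}$ each time a vertex is reached, for $N-1$ records. This gives the budget $k=2(N-1)+c$, where $c$ covers any fixed overhead. Completeness, i.e.\ a Hamiltonian path yields a circuit within budget, is then a direct construction.

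If recorders alone do not make the budget tight enough, I would replicate each recorder $R=\Theta(N)$ times. Then any recorder that is not written directly from a qubit currently holding exactly $\chi_{\v v}$ costs extra gates multiplied by $R$. Correspondingly, set $k=(N-1)+R(N-1)$.

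\textbf{Step 3: soundness (the main obstacle).} I must show that every circuit of length at most $k$ yields a Hamiltonian path. The argument would go by counting.

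First, each recorder copy differs from identity, so it must be the target of at least one CNOT. With replication, this alone consumes $R(N-1)$ gates and leaves only about $N-1$ gates for everything else.

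Next, I would argue that each recorder copy receives exactly one CNOT. Its final row has weight at least $2$ relative to identity and equals $e_r+\v v$. A single CNOT therefore requires that, at that moment, some control qubit holds exactly the parity $\v v$ in the relevant coordinates, up to contributions that must later be cancelled at cost.

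So, at the time each recorder is written, all $N-1$ distinct parities $\v v$ must appear on some non-recorder qubit. The non-recorder qubits change only through the remaining $\approx N-1$ gates, and each gate changes one qubit's parity by one data coordinate or by a combination of them.

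The crux is a potential argument. The multiset of parities held on non-recorder qubits must pass through all of $V$, within the given budget, and must end at identity and $\chi_{\v t}$. Because the points of $V$ have pairwise Hamming distance at least $1$, and the gates beyond the $N-1$ walk moves number only $O(1)$, the trajectory must consist of unit steps inside $V$. Steps leaving $V$ would cost extra gates that cannot be afforded.

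The hard technical points are these:
\begin{itemize}
\item ruling out the circuit using data qubits themselves as auxiliary walkers, or modifying them and restoring them later;
\item ruling out controls taken from recorder qubits.
\end{itemize}
I would handle both by an exchange or normalization lemma: any such deviation costs at least one extra gate per saved step. Tuning the constant $c$ (possibly $c=0$) is what makes the count exactly tight.

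Once the walk is a sequence of hypercube-adjacent vertices of $V$ that starts at $\v s$, ends at $\v t$, and visits every vertex, the edge count $N-1$ makes it a Hamiltonian path in $Q_d[V]\cong G$.
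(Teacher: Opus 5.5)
Your \NP{} membership argument is fine. Once you commit to replication, your instance is essentially the paper's; the paper also keeps recorders for the start vertex, which is harmless.

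The gap is soundness, which is the real content of the theorem. The two ``hard technical points'' you list are exactly what must be proved, and the exchange/normalization lemma you defer to is never formulated. Several intermediate steps are also unjustified or off:
\begin{itemize}
\item With $k=(N-1)+R(N-1)$ there is no slack at all, not $O(1)$ spare gates, so the argument must be exact and $c$ must be $0$.
\item The claim that each recorder copy receives exactly one CNOT is asserted without proof. A priori a recorder could be hit twice, and its non-final gates could help produce required parities.
\item Nothing shows that the controls used to write recorders lie on non-recorder qubits.
\item Unit steps do not follow from the points of $V$ being distinct. A CNOT adds whatever combination its control currently holds, so you must show that the control is an untouched data qubit.
\end{itemize}
Note also that the unreplicated variant is genuinely unsound, not merely untight. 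With $k=2(N-1)$, take a pendant vertex $\v p\notin\{\v s,\v t\}$ adjacent to $\v u$ in coordinate $j$. It can be recorded by $r_{\v p}\leftarrow r_{\v p}\oplus w$ while $w$ sits at $\v u$, followed by $r_{\v p}\leftarrow r_{\v p}\oplus x_j$. So any Hamiltonian $\v s$--$\v t$ path of $G-\v p$ fits the budget, even though $G$ has none. Replication with $R\ge N$ is mandatory.

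The paper closes the gap with exact accounting plus a rank argument, not an exchange lemma.
\begin{enumerate}
\item Suppose some required parity $\overline\chi_{\v v}$ never appears on any qubit. Then each of its $R$ recorders needs two gates, since a single gate would need a control holding exactly $\overline\chi_{\v v}$. The count is then at least $RN>k$ when $R\ge N$, so every required parity appears.
\item Call the last gate on each recorder a record gate, and the remaining at most $N-1$ gates walk gates. The gate where a non-initial required parity first appears is never a record gate, because a recorder's final value contains its own variable. Hence there are exactly $N-1$ walk gates, and each creates a fresh required parity containing $x_0$ and no recorder variable.
\item Non-recorder qubits are therefore only overwritten with recorder-free values, so they never hold recorder variables. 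If a walk gate hit a recorder, the recorder-variable columns of the current matrix would have rank below the number of recorders, contradicting invertibility.
\item If a walk gate hit a data qubit, restoring $x_j$ would need another walk gate. But every walk gate leaves an $x_0$-containing parity, so $x_j$ is never restored.
\end{enumerate}
Thus all walk gates target $w$. Their controls are recorder-free and differ from $w$, so they are never-targeted data qubits holding a single $x_j$. The $N-1$ fresh parities, in order, trace a Hamiltonian path from $\v s$ ending at $\v t$. This disposes of both of your hard points directly; without an argument of this kind your proposal proves only completeness.
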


\paragraph{Proof strategy} Let $G$ be an input instance of \GridHP. Our reduction begins by mapping the vertices of $G$ isometrically into a hypercube $Q_d=\{0,1\}^d$ using a unary encoding of their grid coordinates.
The hypercube representation then naturally matches the algebra of CNOT circuits: a hypercube vertex $\v v=(v_1,\ldots,v_d)$ is represented by the parity
\begin{equation}
  \overline \chi_{\v v}(\v x)=x_0\oplus\bigoplus_{j=1}^{d}v_jx_j,
\end{equation}
where $\v x=(x_0,x_1,\ldots,x_d)$ denotes input variables of a CNOT circuit. If two vertices differ in coordinate $j$, their corresponding parities differ by exactly $x_j$. A CNOT gate that adds $x_j$ to the qubit carrying the current parity therefore realizes a traversal of the corresponding hypercube edge.

We then reduce the Hamiltonian path problem on the hypercube to an exact CNOT synthesis problem instance. The circuit to be synthesized contains a designated path qubit whose evolving parity represents the current position in the hypercube. A major obstacle, however, is that a Hamiltonian path requires every intermediate vertex to be visited, whereas in exact CNOT synthesis one can only constrain the final matrix and cannot directly prescribe intermediate parity checkpoints that must be reached. We overcome this challenge by introducing a group of \emph{recorder qubits}, whose final outputs are required to contain (hence \emph{record}) the intermediate parities (namely vertices in the hypercube) that the circuit must reach. In the intended circuit, the path qubit follows a Hamiltonian path through the embedded graph and, upon reaching each vertex, transfers its current parity to the associated recorder qubits.

A candidate circuit could nevertheless try to cheat and circumvent this requirement by mixing data and recorder qubits or by introducing temporary intermediate parities on the recorder qubits; to rule out such shortcuts, we require multiple replicas of the recorder qubits and impose a tight gate budget so that there is no room for such auxiliary computation. As a result, every sufficiently short circuit that meets the requirement would visit all required vertex parities, thereby inducing a Hamiltonian path and completing the reduction.

\begin{proof}
We first show that $\MinCNOTv\in\NP$.  Apparently, a CNOT circuit $C$ of length at most $k$ implementing the parity matrix $A$ can serve as a witness for a positive instance in $\MinCNOTv$. Note that each CNOT gate acts as an elementary row addition over $\Ftwo$, and can therefore be simulated on the parity matrix in $O(n)$ time.  Hence $C$ can be verified in $O(|C|\cdot n)$ time. After that, Patel, Markov, and Hayes~\cite{pmh_optimal_linear_synthesis} showed that every $n$-qubit linear reversible transformation admits a CNOT implementation of size at most $O(n^2/\log n)$.  Consequently, for each positive instance there is a witness of polynomial length that can be verified in $O(n^3/\log n)$ time.  That is to say, $\MinCNOTv\in\NP$.

We next prove \NP-hardness of $\MinCNOTv$ via a polynomial-time reduction from $\GridHP$.

\paragraph{\textbf{Step 1}: From grid graph to hypercube.} Let $(G,\v s,\v t)$ be a problem instance of $\GridHP$, where $G=(V_G,E_G)$ is the grid graph, and $\v s,\v t$ are the specified endpoints. Denote the number of vertices as $m:=|V_G|$. We first embed the grid graph $G$ into a hypercube $Q_d$.

For a vertex with associated coordinates $\v v_G=(x,y)\in V_G$, we define the unary encoding as:
\begin{equation}
  \mathrm{enc}_x(x)
  :=1^{\,x-x_{\min}}0^{\,x_{\max}-x}
  \in\Ftwo^{x_{\max}-x_{\min}},
\end{equation}
where $x_{\min},x_{\max}$ are the extremal horizontal coordinates for vertex coordinates in $G$, namely $x_{\min}=\min_{(x,y)\in V_G}x$ and $x_{\max}=\max_{(x,y)\in V_G}x$. Define $\mathrm{enc}_y(y)$ similarly for the vertical coordinate; then the entire vertex can be encoded as
\begin{equation}
  \mathrm{enc}(\v v_G)\equiv \mathrm{enc}(x,y):=\mathrm{enc}_x(x)\,\|\,\mathrm{enc}_y(y)
  \in\Ftwo^d,
\end{equation}
where $\|$ denotes string concatenation and $d=(x_{\max}-x_{\min})+(y_{\max}-y_{\min})$. As a slight abuse of notation, we do not distinguish the encoded binary strings and the tuples of binary coordinates, such as $110\equiv (1,1,0)$, as it should usually be clear from the context in our setting.

The unary encoding is isometric on both dimensions, hence the Manhattan distance on the original grid graph is preserved:
\begin{equation}
  |\mathrm{enc}(x,y)-\mathrm{enc}(x',y')|=|x-x'|+|y-y'|,\label{eq:isometry}
\end{equation}
where on the left-hand side the Manhattan distance happens to be the Hamming distance between binary strings as well.
As a result, all neighboring vertices in the original grid graph remain neighboring after the encoding, and the original edges are preserved in the hypercube. As a result, the proposed unary encoding scheme is an isomorphism.

Finally, we encode the specific problem instance $(G,\v s,\v t)$ into a Hamiltonian path problem instance $(V,\v s',\v t')$ on the hypercube graph $Q_d$. We define
\begin{align}
  V&=\{\mathrm{enc}(\v v_G)\oplus \mathrm{enc}(\v s):\v v_G\in V_G\},\\
  \v s'&=\mathrm{enc}(\v s)\oplus \mathrm{enc}(\v s)=\v 0,\\
  \v t'&=\mathrm{enc}(\v t)\oplus \mathrm{enc}(\v s).
\end{align}
And we have $m:=|V|=|V_G|$. Here, we have shifted every vertex by $\mathrm{enc}(\v s)$ to make the path start at $\v 0$. Then, $\v 0,\v t'\in V$ serve as the new endpoints in the hypercube. As Eq.~\eqref{eq:isometry} shows that the connections are preserved after the encoding, the induced subgraph on hypercube $Q_d[V]$ is isomorphic to $G$\footnote{The shifting also preserves distance, so the isomorphism is not broken after all vertices are shifted by $\oplus\mathrm{enc}(\v s)$.}.  Consequently,
\begin{equation}
  G\text{ admits a Hamiltonian path from }\v s\text{ to }\v t
  \quad\Longleftrightarrow\quad
  Q_d[V]\text{ admits a Hamiltonian path from }\v s'=\v 0\text{ to }\v t'.
  \label{eq:grid-cube-equivalence}
\end{equation}

The dimension is polynomially bounded. If $G$ is not connected, then the problem instance is trivial and can be immediately determined as negative without reduction. Since $G$ is connected, examining the minimal spanning tree of grid graphs shows that $d\le m-1$. Indeed, the spanning tree must contain at least $x_{\max}-x_{\min}$ horizontal edges to connect a leftmost vertex to a rightmost vertex, and at least $y_{\max}-y_{\min}$ vertical edges to connect a bottommost vertex to a topmost vertex. These two sets of edges are disjoint, so the tree contains at least $d$ edges. Since every spanning tree on $m$ vertices has exactly $m-1$ edges, we have $d\le m-1$.

\paragraph{\textbf{Step 2}: From hypercube Hamiltonian path to \MinCNOTv.} The hypercube formulation is particularly amenable to CNOT circuits: the vertices are represented as binary strings, and can be directly translated to parities on qubits in a CNOT circuit; walking along an edge on the hypercube is equivalent to flipping a vertex coordinate, and can be directly translated to a CNOT gate which flips the target qubit in the circuit.

A detailed demonstration of the reduction from hypercube Hamiltonian path to \MinCNOTv is shown in Fig.~\ref{fig:reduction-demonstration}.

\begin{figure}[t]
  \centering
  \resizebox{\linewidth}{!}{%
  \begin{tikzpicture}[font=\footnotesize]
    \begin{yquant}[drawing mode=quality,operator/separation=0.4cm,register/separation=0.3cm]
      qubit {$x_0$} path;

      qubit {$x_1$} c1;
      qubit {$x_2$} c2;
      qubit {\raisebox{1mm}{\(\vdots\)}} cdots;
      setstyle {draw=none} cdots;
      qubit {$x_d$} cd;

      qubit {$y_{\v v_1,1}\sim y_{\v v_1,m}$} ru;
      setstyle {double,double distance=0.8pt} ru;
      qubit {$y_{\v v_2,1}\sim y_{\v v_2,m}$} rup;
      setstyle {double,double distance=0.8pt} rup;
      qubit {$y_{\v v_3,1}\sim y_{\v v_3,m}$} rthree;
      setstyle {double,double distance=0.8pt} rthree;
      qubit {\raisebox{1mm}{\(\vdots\)}} rdots;
      setstyle {draw=none} rdots;
      qubit {$y_{\v v_m,1}\sim y_{\v v_m,m}$} rupp;
      setstyle {double,double distance=0.8pt} rupp;

      {
        \yquantset{every control line/.style={draw,double,double distance=0.8pt}}
        [name=recordfirst]
        cnot ru | path;
      }

      [name=walkone]
      cnot path | c1;

      {
        \yquantset{every control line/.style={draw,double,double distance=0.8pt}}
        [name=recordsecond]
        cnot rthree | path;
      }

      [name=walkd]
      cnot path | cd;

      {
        \yquantset{every control line/.style={draw,double,double distance=0.8pt}}
        [name=recordthird]
        cnot rup | path;
      }

      align path,c1,c2,cdots,cd,ru,rup,rthree,rdots,rupp;
      hspace {3.0cm} path,c1,c2,cdots,cd,ru,rup,rthree,rdots,rupp;

      [name=pathout]
      output {$\overline\chi_{\v t'}(\v x)$} path;
      [name=c1out]
      output {$x_1$} c1;
      [name=c2out]
      output {$x_2$} c2;
      [name=cdotsout]
      output {\raisebox{1mm}{\(\vdots\)}} cdots;
      [name=cdout]
      output {$x_d$} cd;
      [name=ruout,font=\scriptsize]
      output {$y_{\v v_1,i}\oplus
        \overline\chi_{\v v_1}(\v x)$} ru;
      [name=rupout,font=\scriptsize]
      output {$y_{\v v_2,i}\oplus
        \overline\chi_{\v v_2}(\v x)$} rup;
      [name=rthreeout,font=\scriptsize]
      output {$y_{\v v_3,i}\oplus
        \overline\chi_{\v v_3}(\v x)$} rthree;
      [name=rdotsout]
      output {\raisebox{1mm}{\(\vdots\)}} rdots;
      [name=ruppout,font=\scriptsize]
      output {$y_{\v v_m,i}\oplus
        \overline\chi_{\v v_m}(\v x)$} rupp;
    \end{yquant}

    \node[anchor=south,align=center]
      at ([yshift=2mm]recordfirst-p)
      {$\overline\chi_{\v v_1}(\v x)=x_0$};

    \coordinate (circuitwest) at (current bounding box.west);
    \coordinate (bracecolumn) at ([xshift=0mm]circuitwest);
    \coordinate (annotationright) at ([xshift=-1.5mm]bracecolumn);
    \path (c1out) -- (cdout) coordinate[midway] (coordinatemid);
    \path (recordfirst) -- (ruppout) coordinate[midway] (recordermid);

    \node[anchor=east,text width=25mm,inner sep=0pt,
          align=center,font=\normalsize]
      at ([xshift=3.5mm]annotationright |- recordsecond-p)
      {Path Qubit};

    \draw[decorate,decoration={brace,mirror,amplitude=4pt}]
      (bracecolumn |- c1out)
      -- (bracecolumn |- cdout);
    \node[anchor=east,text width=25mm,inner sep=0pt,
          align=center,font=\normalsize]
      at ([xshift=3.5mm]annotationright |- coordinatemid)
      {Coordinate\\Qubits};

    \draw[decorate,decoration={brace,mirror,amplitude=4pt}]
      (bracecolumn |- recordfirst)
      -- (bracecolumn |- ruppout);
    \node[anchor=east,text width=25mm,inner sep=0pt,
          align=center,font=\normalsize]
      at ([xshift=3.5mm]annotationright |- recordermid)
      {Recorder\\Qubits};

    \node[inner sep=0pt] at (walkone |- ruout) {$/$};
    \node[inner sep=0pt] at (walkone |- rupout) {$/$};
    \node[inner sep=0pt] at (walkone |- rthreeout) {$/$};
    \node[inner sep=0pt] at (walkone |- ruppout) {$/$};

    \node[inner sep=0pt]
      at (cdotsout -| walkone) {\raisebox{1mm}{\(\vdots\)}};
    \node[inner sep=0pt]
      at (rdotsout -| walkone) {\raisebox{1mm}{\(\vdots\)}};

    \path (recordfirst-p) -- (recordfirst)
      coordinate[midway] (verticalbundlemarkfirst);
    \node[inner sep=0pt,rotate=90] at (verticalbundlemarkfirst) {$/$};

    \path (recordsecond-p) -- (recordsecond)
      coordinate[midway] (verticalbundlemarksecond);
    \node[inner sep=0pt,rotate=90] at (verticalbundlemarksecond) {$/$};
    \path (recordthird-p) -- (recordthird)
      coordinate[midway] (verticalbundlemarkthird);
    \node[inner sep=0pt,rotate=90] at (verticalbundlemarkthird) {$/$};

    \path (recordthird-p) -- (pathout.west)
      coordinate[midway] (moregatesx);
    \node[inner sep=0pt,fill=white]
      at (pathout -| moregatesx) {$\cdots$};
    \node[inner sep=0pt,fill=white]
      at (c1out -| moregatesx) {$\cdots$};
    \node[inner sep=0pt,fill=white]
      at (c2out -| moregatesx) {$\cdots$};
    \node[inner sep=0pt,fill=white]
      at (cdotsout -| moregatesx) {\raisebox{1mm}{\(\vdots\)}};
    \node[inner sep=0pt,fill=white]
      at (cdout -| moregatesx) {$\cdots$};
    \node[inner sep=0pt,fill=white]
      at (ruout -| moregatesx) {$\cdots$};
    \node[inner sep=0pt,fill=white]
      at (rupout -| moregatesx) {$\cdots$};
    \node[inner sep=0pt,fill=white]
      at (rthreeout -| moregatesx) {$\cdots$};
    \node[inner sep=0pt,fill=white]
      at (rdotsout -| moregatesx) {\raisebox{1mm}{\(\vdots\)}};
    \node[inner sep=0pt,fill=white]
      at (ruppout -| moregatesx) {$\cdots$};

    \node[anchor=south,align=center] (checkpointlabel)
      at ([yshift=10mm]recordsecond-p.north)
      {$\overline\chi_{\v v_3}(\v x)
        =x_0\oplus x_1
        \sim(1,0,0,\ldots,0)$};
    \draw[-{Latex[length=1.8mm]}]
      (checkpointlabel.south) -- (recordsecond-p.north);
    \node[anchor=south,align=center]
      at ([xshift=12mm,yshift=2mm]recordthird-p.north)
      {$\overline\chi_{\v v_2}(\v x)
        =x_0\oplus x_1\oplus x_d
        \sim(1,0,\ldots,0,1)$};

  \end{tikzpicture}%
  }
  \caption{Demonstration of the reduction from hypercube Hamiltonian path to \MinCNOTv. $\v v_1(=\v 0),\v v_2,\ldots,\v v_m(=\v t')$ denote vertices in the hypercube. The first two walking steps and the first three recording steps are shown, where the vertices are visited in the order $\v 0-\v v_3-\v v_2-\cdots$. Here we have set $\v v_3=(1,0,\ldots,0,0)$ and $\v v_2=(1,0,\ldots,0,1)$ for demonstration. Each horizontal double line denotes the $m$ recorder qubits associated with one vertex, and each vertical double line denotes a bundle of $m$ CNOT gates targeting the same group of recorder qubits.
  }

  \label{fig:reduction-demonstration}
\end{figure}

The construction is as follows. Firstly, we introduce three different types of qubits:
\begin{enumerate}
  \item 1 path qubit: a qubit that carries the parity whose value represents the current hypercube vertex in a candidate Hamiltonian path. The associated input variable of the path qubit is $x_0$.
  \item $d$ coordinate qubits: $d$ qubits that always carry the unit parities $x_1,\ldots, x_d$. They serve as the control qubits used by CNOT gates to walk the path qubit along an edge in the hypercube. The associated input variables of the coordinate qubits are $x_1,\ldots,x_d$.
  \item $m^2$ recorder qubits: $m$ groups of qubits, one group for each vertex $\v v\in V$, with $m$ qubits per group. Their outputs are required to contain the corresponding parity $\overline\chi_{\v v}(\v x)$, thereby forcing the circuit to reach each vertex $\v v$ during the computation. The associated input qubit variables of the recorder qubits are $y_{\v v,i},\, i=1,\ldots,m$ for each hypercube vertex $\v v\in V$.
\end{enumerate}
Then there are $n=d+1+m^2$ qubits altogether.

Now we first establish the correspondence between hypercube and CNOT circuits over the aforementioned qubits. Recall that each vertex $\v v=(v_1,\ldots,v_d)\in V$ is associated with the parity
\begin{equation}
    \overline\chi_{\v v}(\v x)=\chi_{1|\v v}(\v x)
    :=
    x_0\oplus\bigoplus_{j=1}^{d}v_jx_j,
    \qquad
    \v x=(x_0,x_1,\ldots,x_d).
    \label{eq:vertex_parity}
\end{equation}
Here, we have used $\overline\chi$ to denote the parities augmented with  the path variable $x_0$. In particular, $\overline\chi_{\v 0}(\v x)=x_0$. Moreover, for two neighboring vertices $\v u,\v v\in V$ that differ in the $j$-th coordinate, we have the following relation:
\begin{equation}
  |\v u-\v v|=1
    \quad\Longleftrightarrow\quad
  \v u\oplus\v v=\v e_j
    \quad\Longleftrightarrow\quad
  \overline\chi_{\v u}(\v x)\oplus \overline\chi_{\v v}(\v x)=x_j
    \quad\Longleftrightarrow\quad
  \overline\chi_{\v u}(\v x)\oplus x_j=\overline\chi_{\v v}(\v x)
    \label{eq:hypercube_parity_adj}
\end{equation}
where $\v e_j$ is the $j$-th standard basis vector of $\Ftwo^d$. Hence, whenever the path qubit carries $\overline\chi_{\v u}(\v x)$, a CNOT controlled by the $j$-th coordinate qubit and targeting the path qubit updates its parity to $\overline\chi_{\v v}(\v x)$, exactly corresponding to traversing the hypercube edge from $\v u$ to $\v v$.

The target linear transformation is then specified as follows:
\begin{align}
    x_0
        &\longmapsto \overline\chi_{\v t'}(\v x),\\
    x_j
        &\longmapsto x_j,
        & 1\le j\le d,\\
    y_{\v v,i}
        &\longmapsto y_{\v v,i}\oplus \overline\chi_{\v v}(\v x),
        & \forall \,\v v\in V,\quad 1\le i\le m.
\end{align}
Furthermore, we let $A_V$ denote the parity matrix corresponding to the linear transformation over $\Ftwo$. Equivalently, $A_V$ has the following block form:
\begin{equation}
  A_V=
  \begin{bmatrix}
    1 & \v t' & \v 0 \\
    \v 0 & I_d & \v 0 \\
    \v 1 & X &  I_{m^2}
  \end{bmatrix},
  \label{eq:target-block}
\end{equation}
where $X\in\Ftwo^{m^2\times d}$ is a matrix formed by repeating each row $\v v\in V$ exactly $m$ times and combining all the $m$-row groups. Apparently, $\left[\begin{smallmatrix} 1 & \v t'\\ \v 0 & I_d \\
\end{smallmatrix}\right]$ is invertible;
Eq.~\eqref{eq:target-block} therefore implies $A_V$ is invertible and belongs to $\GL(n,2)$.

Finally, we set the gate bound to
\begin{equation}
    k=m^2+m-1,
    \label{eq:gate_bound}
\end{equation}
where $m-1$ is designed as the budget for the steps in the Hamiltonian path, and $m^2$ is the budget for the recording.

To summarize, the reduction outputs an exact CNOT synthesis instance with the following parameters:
\begin{equation}
  n=d+1+m^2,\qquad
  A_V=\begin{bmatrix}
    1 & \v t' & \v 0 \\
    \v 0 & I_d & \v 0 \\
    \v 1 & X &  I_{m^2}
  \end{bmatrix},\qquad
  k=m^2+m-1.\label{eq:red_step2_res}
\end{equation}
Next, we show that
\begin{equation}
    Q_d[V]\text{ admits a Hamiltonian path between }\v 0\text{ and }\v t'
    \quad\Longleftrightarrow\quad
    (A_V,k)\in\MinCNOTv.
    \label{eq:step2_equivalence}
\end{equation}

\paragraph{\textbf{Completeness} ($\Rightarrow$): Constructing a CNOT circuit from a Hamiltonian path.}

This is the easy direction. Suppose
\begin{equation}
    \v w_1(=\v 0)-\v w_2-\ldots-\v w_{m}(=\v t')
\end{equation}
is a Hamiltonian path on $V$, where each $\v w_i$ vertex is distinct and is equal to some $\v v\in V$.  We construct a CNOT circuit satisfying the specification in Eq.~\eqref{eq:red_step2_res} as follows.

Initially, the path qubit carries $x_0=\overline\chi_{\v 0}(\v x)$.  Whenever the path qubit carries $\overline\chi_{\v w_i}(\v x)$, use this qubit as the control of $m$ CNOT gates, one targeting each recorder qubit associated with $\v w_i$.  This realizes
\begin{equation}
    y_{\v w_i,j}
    \longmapsto
    y_{\v w_i,j}\oplus\overline\chi_{\v w_i}(\v x),
    \qquad 1\le j\le m.
\end{equation}
Over all vertices, these recorder updates require $m^2$ CNOT gates in total.

For every consecutive vertex pair $\v w_i,\v w_{i+1}$, the Hamiltonian path edge implies that they differ in exactly one coordinate, say $\ell$.  A CNOT from the $\ell$-th coordinate qubit to the path qubit then updates
\begin{equation}
    \overline\chi_{\v w_i}(\v x)
    \longmapsto
    \overline\chi_{\v w_i}(\v x)\oplus x_\ell
    =
    \overline\chi_{\v w_{i+1}}(\v x).
\end{equation}
The $m-1$ edges of the path therefore require exactly $m-1$ additional CNOT gates.  At the end, the path qubit $0$ carries $\overline\chi_{\v t'}(\v x)$, and the remaining coordinate qubits still carry $x_1,\ldots,x_d$, and every recorder has the prescribed output.  Hence $A_V$ is implemented using exactly
\begin{equation}
    m^2+m-1=k
\end{equation}
CNOT gates.

\paragraph{\textbf{Soundness} ($\Leftarrow$): Extracting a Hamiltonian path from a short CNOT circuit.}

Conversely, we now consider an arbitrary CNOT circuit $C$ of length $L:=|C|\le k$ that implements $A_V$. This direction is considerably more subtle. We introduced the concept  of path, coordinate, and recorder qubits during the reduction, but it is not a constraint imposed on the circuit synthesizer: an arbitrary implementation may mix these qubits, alter their intermediate parities, or use them as temporary workspace. Consequently, it is not trivial to extract the intended Hamiltonian path from an arbitrary circuit implementing $A_V$ with no more than $k$ CNOT gates. Fortunately, we can show, through some careful reasoning, that the intended division among the path, coordinate, and recorder qubits naturally appears in any valid CNOT circuit under the tight gate bound $k=m^2+m-1$.

We first note that all $m^2$ recorder qubits have changed after the circuit, so there are at least $m^2$ CNOT gates targeting the recorder qubits in $C$.

More precisely, collect the last CNOT gates targeting each recorder qubit into a set and denote it by $\mathcal G_{\rm record}$, and denote the  remaining gates by $\mathcal G_{\rm parity}$ (we will see very soon why these are the gates corresponding to the parities). Then we have:
\begin{equation}
  |\mathcal G_{\rm record}|= m^2,\quad |\mathcal G_{\rm parity}|=|C|-|\mathcal G_{\rm record}|=L-m^2\le m-1.\label{eq:gateset_def}
\end{equation}
We can then prove the following facts, which lead to the path, coordinate, and recorder qubit structure:

\paragraph{(1) All parities $\overline\chi_{\v v}(\v x)$ must appear in the circuit:} 
We show this fact by contradiction: if a parity $\overline\chi_{\v v}(\v x)$ never appears in the circuit, then it would cost at least $2m$ CNOT gates targeting the $m$ corresponding recorder qubits to fulfill their target transformation $y_{\v v, i}\oplus \overline\chi_{\v v}(\v x)$. This would result in at least $m(m-1)+2m=m^2+m>k$ CNOT gates in total, which is impossible. As a result, all $m$ parities must be present in the circuit at some point.

\paragraph{(2) $\mathcal G_{\rm parity}$ is precisely the set of gates that first generate the non-initial parities $\overline\chi_{\v v\neq\v 0}(\v x)$:} All the parities $\overline\chi_{\v v}(\v x)$ are absent at the beginning of the circuit except for $\overline\chi_{\v 0}(\v x)$, leaving $m-1$ distinct parities to be generated. It therefore takes at least another $m-1$ CNOT gates to generate them. These parity-generating CNOT gates cannot be the last gates targeting recorder qubits (and hence cannot belong to $\mathcal G_{\rm record}$), because they result in parities $\overline\chi_{\v v}(\v x)$, but not the required output $y_{\v v',i}\oplus \overline \chi_{\v v'}(\v x)$ on any recorder qubits. In other words, the CNOT gates that generate new parities can only be in $\mathcal G_{\rm parity}$. As a result, we have $|\mathcal G_{\rm parity}|\ge m-1$. Combining with Eq.~\eqref{eq:gateset_def}, we see that
\begin{equation}
  m-1\le |\mathcal G_{\rm parity}|\le m-1\implies |\mathcal G_{\rm parity}|=m-1.
\end{equation}
The relation above shows that there is no gate surplus in $\mathcal G_{\rm parity}$, and every CNOT gate in $\mathcal G_{\rm parity}$ must derive a new non-initial parity $\overline\chi_{\v v\neq\v 0}(\v x)$ on its target qubit. As a result, there is a one-to-one correspondence between the gates in $\mathcal G_{\rm parity}$ and all non-initial parities $\overline\chi_{\v v\neq\v 0}(\v x)$.

\paragraph{(3) Gates in $\mathcal G_{\rm parity}$ cannot target recorder qubits:} Note that every gate in $\mathcal G_{\rm parity}$ leaves a parity $\overline\chi_{\v v}(\v x)$ on its target qubit, which does not contain the recorder variables $y_{\v v,i}$. On the other hand, gates in $\mathcal G_{\rm record}$ only target recorder qubits. Combining both facts, we see that the recorder variables $y_{\v v,i}$ are never injected into the path qubit or any coordinate qubits. As the path and coordinate qubits do not contain recorder variables $y_{\v v,i}$ at the beginning, they can never contain the recorder variables $y_{\v v,i}$ at any location in the circuit.

If a gate in $\mathcal G_{\rm parity}$ targets a recorder qubit, then the targeted recorder qubit would carry $\overline\chi_{\v v}(\v x)$ and its original recorder variable $y_{\v v',i}$ would be punched out. Consequently, $m^2$ recorder variables would have to be carried exclusively by the remaining $m^2-1$ recorder qubits, as the path qubit and coordinate qubits never carry recorder variables. This fact contradicts the reversibility requirement of CNOT circuits. Consequently, gates in $\mathcal G_{\rm parity}$ cannot target recorder qubits.

\paragraph{(4) Gates in $\mathcal G_{\rm parity}$ cannot target coordinate qubits:} If a gate in $\mathcal G_{\rm parity}$ targets a coordinate qubit, then, as shown by (2), it leaves a parity $\overline\chi_{\v v}(\v x)$ on the target qubit. Note that $\overline\chi_{\v v\neq\v 0}(\v x)\neq x_j$ for any coordinate $x_j$, so at least one extra gate is needed to recover the required output $x_j$ on the coordinate qubit. However, no gate can fulfill this task: gates in $\mathcal G_{\rm record}$ do not target coordinate qubits by definition, and we have shown that gates in $\mathcal G_{\rm parity}$ must leave $\overline\chi_{\v v\neq\v 0}(\v x)$ on their target qubits, not $x_j$. By contradiction, gates in $\mathcal G_{\rm parity}$ cannot target coordinate qubits.

\vspace{.7em}
(2),(3) and (4) tell us that all gates in $\mathcal G_{\rm parity}$ can only target the path qubit, and each of them generates a new non-initial parity $\chi_{\v v\neq\v 0}(\v x)$ on its target qubit. On the other hand, all other $m^2$ gates are in $\mathcal G_{\rm record}$, each targeting a recorder qubit. To summarize, the properties proved above can be organized into the following lemma:
\begin{lemma}[Gate structure]\label{lemma:gate_struct}
  Suppose $C$ is a short circuit with $L:=|C|\le k$ CNOT gates implementing the problem instance specified by~\eqref{eq:red_step2_res}. Then:
  \begin{enumerate}
    \item the CNOT gates in $C$ can be divided into two disjoint gate sets $\mathcal G_{\rm parity}$ and $\mathcal G_{\rm record}$;
    \item $|\mathcal G_{\rm parity}|=m-1$, and all gates in $\mathcal G_{\rm parity}$ target the path qubit and leave a new non-initial parity $\overline\chi_{\v v\neq\v 0}(\v x)$ on the path qubit;
    \item $|\mathcal G_{\rm record}|=m^2$, and every gate in $\mathcal G_{\rm record}$ targets one of the $m^2$ recorder qubits and changes it from $y_{\v v,i}$ to the target $y_{\v v,i}\oplus \overline\chi_{\v v}(\v x)$.
  \end{enumerate}
\end{lemma}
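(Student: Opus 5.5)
The plan is to prove Lemma~\ref{lemma:gate_struct} by a counting argument that makes facts (1)--(4) fully rigorous. We compare the gate budget $k=m^2+m-1$ against two lower bounds: one for the gates that must \emph{last-touch} recorder qubits, and one for the gates that must \emph{first create} each vertex parity.

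\textbf{Setup and fact (1).} Fix $C$ with $|C|=L\le k$ implementing $A_V$. Every recorder qubit's final row differs from its initial row $y_{\v v,i}$, so each is targeted at least once. Define $\mathcal G_{\rm record}$ as the set of last gates targeting each recorder qubit; then $|\mathcal G_{\rm record}|=m^2$ exactly, and $\mathcal G_{\rm parity}$ is the complement, with $|\mathcal G_{\rm parity}|\le m-1$. This gives part (1) of the lemma.

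For fact (1) of the soundness argument, take any $\v v$ whose parity $\overline\chi_{\v v}$ never appears as a row (qubit content) during $C$. Each of the $m$ qubits $y_{\v v,i}$ must end at $y_{\v v,i}\oplus\overline\chi_{\v v}$. I would argue that a qubit touched only once ends as $y_{\v v,i}\oplus(\text{control content})$. So with a single gate, the control content would have to equal $\overline\chi_{\v v}$ at that moment, which is a contradiction. Each such qubit therefore needs at least two targeting gates, giving at least $m^2+m>k$ gates in total. Hence every $\overline\chi_{\v v}$ appears at some point.

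\textbf{Fact (2).} Next I would show that the first gate producing each non-initial parity lies in $\mathcal G_{\rm parity}$. The output of a gate in $\mathcal G_{\rm record}$ on a recorder qubit is its final value $y\oplus\overline\chi$, which contains a $y$-variable, so it is not a pure $\overline\chi_{\v v}$. Distinct parities have distinct first-creating gates, since each gate creates one new row. This gives $|\mathcal G_{\rm parity}|\ge m-1$, hence equality, and a bijection between $\mathcal G_{\rm parity}$ and the non-initial parities. In particular, every gate of $\mathcal G_{\rm parity}$ writes some $\overline\chi_{\v v\neq\v 0}$ onto its target.

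\textbf{Facts (3) and (4) and the main obstacle.} It remains to show that these targets are the path qubit. From (2), no gate ever writes a $y$-containing row onto a non-recorder qubit: parity gates write pure $\overline\chi$ rows, and record gates target only recorders. By induction over the circuit, the path and coordinate qubits never contain $y$-variables.

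Now suppose a parity gate targets a recorder qubit. Right after it, that qubit holds a $y$-free row. At that moment the rows form an invertible matrix, and the $y$-free rows live in a space of dimension $d+1$. There would be $d+2$ of them: the path qubit, the $d$ coordinate qubits, and this recorder. That contradicts invertibility. This rank argument is cleaner than the ``variables punched out'' wording and is the step I would write carefully.

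For coordinate qubits, the argument is as follows. Once a parity gate writes $\overline\chi_{\v v}$ onto coordinate qubit $j$, every later gate on that qubit is also a parity gate, and so also writes some $\overline\chi$, which contains $x_0$. The final value $x_j$ does not contain $x_0$, which is a contradiction. Hence all parity gates target the path qubit, which proves part (2) of the lemma.

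\textbf{Part (3).} Finally, since each recorder qubit is targeted exactly once (only by its $\mathcal G_{\rm record}$ gate), that gate maps $y_{\v v,i}$ to the prescribed output $y_{\v v,i}\oplus\overline\chi_{\v v}$.

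The main obstacle I expect is the ``at least two gates'' claim in fact (1). A recorder qubit touched twice could in principle be touched by other recorders' contents, so I would only use the single-touch case, which is all the counting needs.
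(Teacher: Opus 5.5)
Your proposal is correct and follows the paper's proof essentially step for step. It uses the same split into the last recorder-targeting gates $\mathcal G_{\rm record}$ and the remaining at most $m-1$ gates, the same single-touch counting for fact (1), the same first-creation bijection for fact (2), and the same arguments excluding recorder and coordinate targets; the only difference is that you write the paper's ``recorder variable punched out / reversibility'' step as an explicit dimension count ($d+2$ independent $y$-free rows in a $(d+1)$-dimensional space), which is a cleaner statement of the same idea.
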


From Lemma~\ref{lemma:gate_struct}, we see that, starting from $x_0=\overline\chi_{\v w_1=0}(\v x)$, the path qubit is targeted only by the $m-1$ parity-generating CNOT gates in $\mathcal G_{\rm parity}$, which leave $m-1$ consecutive parities $\overline\chi_{\v w_2}(\v x), \overline\chi_{\v w_3}(\v x), \ldots, \overline\chi_{\v v_{m}=\v t'}(\v x)$ on the path qubit. From Eq.~\eqref{eq:hypercube_parity_adj}, we see that all the vertices corresponding to the parities $\v 0,\v w_2,\v w_3,\ldots,\v t'$ are adjacent on the hypercube. We can then naturally extract them and derive a Hamiltonian path $\v 0-\v w_2-\v w_3-\cdots-\v t'$.

Now we have completed the proof.

\end{proof}

As a direct result, the bounded decision version of the problem can be reduced to the optimization version of the problem. As a corollary, we immediately derive the following hardness result:

\begin{corollary}
  $\MinCNOTv^{opt}$ is \NP-hard.
\end{corollary}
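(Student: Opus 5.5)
The corollary follows from Theorem~\ref{thm:main} by a polynomial-time Turing (indeed, many-one style) reduction from \MinCNOTv to $\MinCNOTv^{opt}$. Suppose we are given an oracle that, on input $(n,A)$, returns a smallest CNOT circuit $C^\star$ implementing $A$. Given an instance $(n,A,k)$ of \MinCNOTv, call the oracle on $(n,A)$ and obtain $C^\star$, so that $|C^\star|=\#\cnot(A)$. Then answer ``yes'' exactly when $|C^\star|\le k$. This is correct by the definition of CNOT length: a circuit of length at most $k$ implementing $A$ exists if and only if $\#\cnot(A)\le k$.

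The only point requiring care is that this reduction runs in polynomial time. The oracle output must therefore have polynomial size, so that reading it and counting its gates is cheap. This is guaranteed by the Patel--Markov--Hayes bound already invoked in the \NP-membership part of the proof of Theorem~\ref{thm:main}: every $A\in\GL(n,2)$ has an implementation with $O(n^2/\log n)$ gates, hence the optimal circuit satisfies $|C^\star|=O(n^2/\log n)$. Counting its gates and comparing with $k$ takes polynomial time. If desired, one can also verify $C^\star$ in $O(|C^\star|\cdot n)$ time by simulating row additions.

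Thus a polynomial-time algorithm for $\MinCNOTv^{opt}$ would yield one for \MinCNOTv. Since \MinCNOTv is \NP-complete by Theorem~\ref{thm:main}, $\MinCNOTv^{opt}$ is \NP-hard. There is no real obstacle; the only step that needs stating explicitly is the polynomial bound on the size of the optimal circuit.
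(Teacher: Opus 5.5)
Your proposal is correct and follows the paper's argument: the paper notes that $\MinCNOTv^{opt}\ge_{\rm poly}\MinCNOTv$ (compute an optimal circuit, compare its length with $k$) and then applies Theorem~\ref{thm:main}. Your remark that the oracle's output has polynomial size, by the Patel--Markov--Hayes $O(n^2/\log n)$ bound, makes explicit a detail the paper leaves implicit.
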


\subsection{Further strengthening of the hardness theorem}
\label{sec:strengthen}

In the reduction from $\GridHP$ to \MinCNOTv, only a specific type of linear transformation is used in Eq.~\eqref{eq:red_step2_res}. In this section we show how to exploit the fact and strengthen our hardness result. In particular, we have the following strengthened version of Theorem~\ref{thm:main}:

\begin{theorem}\label{thm:main_strengthen}
  \MinCNOTv is \NP-complete, and $\MinCNOTv^{opt}$ is \NP-hard. Furthermore, the \NP-completeness and \NP-hardness hold even for problem instances $(A,k)$ in which
  \begin{enumerate}
    \item the target linear transformation $A$ is unitriangular;
    \item $A$ satisfies $(A-I)^3=0$;
    \item $\rank(A-I)=\O(\sqrt{n})$;
    \item $k<n+\O(\sqrt{n})$.
  \end{enumerate}
\end{theorem}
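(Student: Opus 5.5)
The plan is to reuse the reduction from the proof of Theorem~\ref{thm:main} unchanged and simply read off the four structural properties from the output instance $(A_V,k)$ in Eq.~\eqref{eq:red_step2_res}. Since that reduction is already a polynomial-time reduction from \GridHP, and membership in \NP{} was shown there, restricting \MinCNOTv{} to instances satisfying (1)--(4) stays \NP-complete as soon as every produced instance satisfies (1)--(4). \NP-hardness of $\MinCNOTv^{opt}$ on the same class then follows as in the corollary. The only freedom we may need is a relabeling of qubits. A relabeling conjugates $A_V$ by a permutation matrix $P$, and it maps CNOT circuits bijectively to CNOT circuits of the same length, so $\#\cnot(PA_VP^{-1})=\#\cnot(A_V)$.

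For (1), $A_V$ in the given order is not triangular: the path row has entries $\v t'$ to the right of the diagonal, while the recorder rows have entries to the left. I would reorder the qubits as (coordinate qubits, path qubit, recorder qubits).

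In this order, the coordinate rows are rows of the identity. The path row has off-diagonal support only in the coordinate columns, which now come earlier. The recorder rows have off-diagonal support only in the path and coordinate columns, which also come earlier. Hence $PA_VP^{-1}$ is lower unitriangular.

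For (2), write $N:=A_V-I$, which is similarity-invariant. The rows of $N$ indexed by coordinate qubits are zero. The path row of $N$ is supported on coordinate columns. The recorder rows of $N$ are supported on the path and coordinate columns. So $N$ maps the "levels" recorder $\to$ path $\to$ coordinate $\to 0$. Concretely, $N^2$ is nonzero only in recorder rows and coordinate columns, where it equals $\v 1\,\v t'$. Then $N^3=N^2N=0$, because $N^2$ is supported on coordinate columns and $N$ vanishes on coordinate rows. Equivalently, this is the standard fact that a matrix strictly lower triangular with respect to a 3-level block flag satisfies $N^3=0$.

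For (3), every nonzero entry of $N$ lies in the $d+1$ columns of the path and coordinate qubits. Hence $\rank N\le d+1\le m$, using $d\le m-1$ from Step~1. Since $n=d+1+m^2\ge m^2$, we get $m\le\sqrt n$, so $\rank(A-I)=\O(\sqrt n)$.

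For (4), $k=m^2+m-1=n-(d+1)+m-1<n+m\le n+\sqrt n$.

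There is no real obstacle here. The only point needing care is (1), where one must notice that the original qubit order is not triangular and justify that permuting qubit labels preserves both the matrix properties (2)--(3) and the optimal CNOT count.
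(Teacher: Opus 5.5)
Your proposal is correct and follows essentially the same route as the paper: the same qubit reordering (coordinate, path, recorder) to make $A_V$ lower unitriangular, the same block computation giving $(A_V-I)^2=\v 1\,\v t'$ in the recorder-row/coordinate-column block and hence $(A_V-I)^3=0$, and the same use of $d\le m-1$ and $n=d+1+m^2$ to bound $\rank(A_V-I)$ by $m\le\sqrt n$ and $k$ by $n+\O(\sqrt n)$. Your explicit justification that relabeling qubits preserves the CNOT count, and your column-support bound $\rank(A_V-I)\le d+1\le m$, are slightly cleaner than the paper's versions: the paper leaves the relabeling implicit, and its bound $\rank(\v t')+\rank(X)$ silently omits the $\v 1$ column, although it still reaches the correct bound $m$. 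These refinements do not change the argument.
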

\begin{proof}
  We modify the proof of Theorem~\ref{thm:main} to show (1) and (2). In the reduction from $\GridHP$ to \MinCNOTv, reorder the qubits into (i) coordinate qubits, (ii) the path qubit, and (iii) recorder qubits. Then the input variable sequence becomes
  \begin{equation}
    x_1,\ldots,x_d,x_0,y_{\v v_1,i},\ldots,y_{\v v_m,i},
  \end{equation}
  and the constructed target linear transformation $A$ becomes
  \begin{equation}
  A_V=\begin{bmatrix}
    I_d & 0 & \v 0 \\
    \v t' & 1 & \v 0 \\
    X & \v 1 &  I_{m^2}
  \end{bmatrix},
  \end{equation}
  which is lower unitriangular. Thus it suffices to consider problem instances in \MinCNOTv where target linear transformations $A$ are lower unitriangular to make the problem intractable. The same reasoning also holds for upper unitriangular matrices: if we further exchange the position of the recorder qubits and the coordinate qubits, then we obtain target linear transformations that are upper unitriangular. As a result, we have shown the first strengthening statement of the theorem.

  Then direct computation of $A-I$ can show the second strengthening statement\footnote{Note $\v t'$ is a row vector.}:
  \begin{equation}
    A_V-I=\begin{bmatrix}
    \v 0 & 0 & \v 0 \\
    \v t' & 0 & \v 0 \\
    X & \v 1 & \v 0
  \end{bmatrix},\quad
  (A_V-I)^2=\begin{bmatrix}
    \v 0 & 0 & \v 0 \\
    \v 0 & 0 & \v 0 \\
    \v 1\ \v t' & \v 0 & \v 0
  \end{bmatrix},\quad
  (A_V-I)^3=\v 0,
  \end{equation}
  where
  \begin{equation}
    \v 1\ \v t'=\begin{bmatrix}
      1 \\ \vdots \\ 1
    \end{bmatrix}
    \cdot (t'_1,\cdots,t'_d)
    =
    \begin{bmatrix}
      - & \v t' & - \\
      & \vdots \\
      - & \v t' & -
    \end{bmatrix}
  \end{equation}
  is a $m^2\times d$ matrix that contains $m^2$ copies of $\v t'$ as its rows. So it suffices to consider problem instances in \MinCNOTv in which $(A-I)^3=0$ to make the problem intractable.

  The third and fourth strengthening statements require examining the relation between $d,m$ and $n$. Recall that $k=m^2+m-1$, $n=d+1+m^2$, and $0\le d\le m-1$. Then we have:
  \begin{equation}
    m^2+1\le n=d+1+m^2 \le m^2+m<(m+1)^2\implies m=\lfloor \sqrt{n}\rfloor.
  \end{equation}
  As a result, $\rank(A_V-I)\le \rank(\v t')+\rank(X)\le m=O(\sqrt{n})$, demonstrating the third strengthening statement.  That is to say, \NP-completeness persists even when the target matrix is a low-rank perturbation of the identity.

  Finally, using the relation $m=\lfloor\sqrt{n}\rfloor$, we see that for the $k$ used in the reduction we have the following relation:
  \begin{equation}
    k=m^2+m-1=(\lfloor\sqrt{n}\rfloor)^2+\lfloor\sqrt{n}\rfloor-1\le n+\O(\sqrt{n}),
  \end{equation}
  thus completing the fourth strengthening statement in the theorem.
\end{proof}

Theorem~\ref{thm:main_strengthen} shows that the hardness even holds for relatively simple exact CNOT synthesis problem instances. First, the target transformation is algebraically close to the identity: $A$ is unitriangular, and is a low-rank perturbation to the identity. Second, the hardness even persists for relatively shallow CNOT circuits: while the worst-case CNOT complexity can reach $\Theta(n^2/\log n)$ for an $n$-qubit CNOT circuit, the instances produced by our reduction satisfy $k<n+\O(\sqrt n)$. Hence, the computational difficulty is already present in the near-linear gate-count regime, far below the worst-case synthesis complexity.

The structural strengthenings obtained here and in the concurrent work of Acuaviva et al.~\cite{acuaviva2026cnot} are broadly comparable but emphasize different aspects of the problem. Both results establish hardness for unitriangular transformations satisfying $(A-I)^3=0$, while  Acuaviva et al. further derive sparsity restrictions and approximation hardness. Our construction sharpens two other parameters: Theorem~\ref{thm:main_strengthen} gives hardness even when $\rank(A-I)=\O(\sqrt n)$ and under tighter budget $k<n+\O(\sqrt n)$, making the two proofs complementary in both technique and structural implications.

\section{Discussion and Complexity Implications}\label{sec:discussion}

The simplicity of \MinCNOTv allows its complexity implications extend beyond CNOT circuit synthesis itself and applicable to many related problems as well. With no ancillary qubits, topology constraints, or other auxiliary structure, the problem amounts to finding a shortest sequence of elementary linear updates that realizes a prescribed linear invertible transformation over $\Ftwo$. Consequently, the same optimization problem admits several equivalent formulations in algebra, graph theory, linear cryptography, and also appears as a special case of more general quantum-circuit synthesis problems. In this section, we derive complexity implications for shortest words in $\GL(n,2)$, distances in the corresponding Cayley graph, shortest s-XOR programs, and exact synthesis of phase polynomial circuits.

\subsection{Shortest Word Problem}\label{sec:shortest-word}

The most immediate algebraic interpretation of \MinCNOTv is as a shortest word problem over the general linear group $\GL(n,2)$. Recall that each CNOT gate corresponds to an elementary transvection
\begin{equation}
\cnot_{j,i}\sim T_{i,j}=I_n+\v e_i\v e_j^{\mathsf T},\qquad i\neq j,
\end{equation}
whose left multiplication performs the row addition of the matrix $R_i\leftarrow R_i+R_j$. Hence, the set
\begin{equation}
\mathcal T_n:=\{T_{i,j}:i,j\in[n],\,i\neq j\}
\end{equation}
generates $\GL(n,2)$, and a CNOT circuit implementing $A\in\GL(n,2)$ directly corresponds to a word over $\mathcal T_n$ whose product equals $A$.

For a group $G$ generated by a set $S$, the \emph{shortest word problem} of $g\in G$ asks what is the shortest sequence of elementary operations in $S$ needed to generate $g$. In particular, a \emph{word} over $S$ is a finite sequence $s_\ell\cdots s_2s_1$ with $s_i\in S$, and the word length of $g\in G$ with respect to $S$ is
\begin{equation}
\ell_{S}(g):=\min\left\{\ell:g=s_\ell\cdots s_1,\ s_i\in S\right\}.
\end{equation}
The corresponding shortest word problem then asks, given $g$ and an integer $k$, whether $\ell_{S}(g)\le k$.

For $G=\GL(n,2)$ and $S=\mathcal T_n$, the word length $\ell_{\mathcal T_n}(A)$ is precisely the minimum number of CNOT gates required to synthesize $A$. Therefore, \MinCNOTv directly corresponds to the shortest word problem for $\GL(n,2)$ under the generating set of elementary transvections.

\begin{corollary}
The shortest word problem over $\GL(n,2)$ with respect to the elementary transvections $\mathcal T_n$ is \NP-complete. Consequently, computing $\ell_{\mathcal T_n}(A)$ is \NP-hard.
\end{corollary}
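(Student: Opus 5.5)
The plan is to identify the shortest word problem over $\GL(n,2)$ with generators $\mathcal T_n$ with \MinCNOTv itself, instance by instance, and then invoke Theorem~\ref{thm:main} (or Theorem~\ref{thm:main_strengthen}). The reduction is the identity map: an instance $(n,A,k)$ of \MinCNOTv becomes the instance $(g,k)=(A,k)$ of the shortest word problem in $\GL(n,2)$ with $S=\mathcal T_n$. This is computable in polynomial time, since $\mathcal T_n$ is implicitly specified by $n$ and has only $n(n-1)$ elements.

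The key step is the length-preserving bijection between CNOT circuits and words. A circuit $C=g_1g_2\cdots g_\ell$ (applied left to right), where $g_r=\cnot_{j_r,i_r}$, implements the parity matrix $T_{i_\ell,j_\ell}\cdots T_{i_1,j_1}$. This holds because each gate performs the row addition $R_{i}\leftarrow R_i+R_j$ on the current parity matrix, which equals left multiplication by $T_{i,j}$. Conversely, every word $s_\ell\cdots s_1$ over $\mathcal T_n$ reads off a circuit of the same length. Hence $\ell_{\mathcal T_n}(A)=\#\cnot(A)$ for every $A\in\GL(n,2)$, and $\ell_{\mathcal T_n}(A)\le k$ holds if and only if $(A,k)\in\MinCNOTv$.

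For \NP-hardness, \NP-hardness of \MinCNOTv from Theorem~\ref{thm:main} transfers directly through this identity reduction. For membership in \NP, a word of length at most $k$ is a certificate. I will use the Patel--Markov--Hayes bound $O(n^2/\log n)$ quoted in the proof of Theorem~\ref{thm:main} to argue that whenever $k$ exceeds this bound the answer is trivially yes. Otherwise the certificate has polynomial length and can be checked by $O(kn)$ row operations. Finally, computing $\ell_{\mathcal T_n}(A)$ is \NP-hard because an oracle for it decides the bounded version by a single comparison with $k$.

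The only point I expect to need care is bookkeeping rather than mathematics. The order convention of the word ($s_\ell\cdots s_1$ versus circuit order) and the index convention ($\cnot_{j,i}\sim T_{i,j}$) must match so that the product equals $A$ and not its transpose or inverse. I would check this explicitly on one gate; for instance, $\cnot_{1,2}$ gives $I+\v e_2\v e_1^{\mathsf T}$, which matches the example matrix in the background section. Even if the convention were flipped, the result would survive: the map $A\mapsto A^{-1}$ (or transpose) preserves word length over $\mathcal T_n$, because $\mathcal T_n$ is closed under inversion ($T_{i,j}^{-1}=T_{i,j}$) and under transposition.
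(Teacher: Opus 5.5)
Your proposal is correct and follows the paper's own argument: you identify each $\cnot_{j,i}$ with the transvection $T_{i,j}=I_n+\v e_i\v e_j^{\mathsf T}$ acting by left multiplication, deduce $\ell_{\mathcal T_n}(A)=\#\cnot(A)$, and transfer the \NP-completeness of \MinCNOTv through the identity reduction. You spell out \NP{} membership and the order/transpose conventions more explicitly than the paper does; for your case split on $k$, the explicit $n^2$ bound from plain Gaussian elimination gives a concrete threshold where the asymptotic Patel--Markov--Hayes bound does not, although simply taking a minimal word as the certificate already suffices.
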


\subsection{Cayley Graph Distance}

\begin{figure}[t]
\centering
\begin{tikzpicture}[scale=1.0, every node/.style={inner sep=2pt}, >=stealth]

\node (I) at (-3,0) {$\begin{bmatrix}1&0\\0&1\end{bmatrix}$};
\node (A) at (-1.25,1.55) {$\begin{bmatrix}1&1\\0&1\end{bmatrix}$};
\node (C) at (1.25,1.55) {$\begin{bmatrix}1&1\\1&0\end{bmatrix}$};
\node (P) at (3,0) {$\begin{bmatrix}0&1\\1&0\end{bmatrix}$};
\node (D) at (1.25,-1.55) {$\begin{bmatrix}0&1\\1&1\end{bmatrix}$};
\node (B) at (-1.25,-1.55) {$\begin{bmatrix}1&0\\1&1\end{bmatrix}$};

\draw[->] ([xshift=-1.5pt,yshift=1.2pt]I.north east) -- ([xshift=-1.5pt,yshift=1.2pt]A.south west);
\draw[->] ([xshift=1.5pt,yshift=-1.2pt]A.south west) -- ([xshift=1.5pt,yshift=-1.2pt]I.north east);
\path (I) -- (A) node[pos=.84,left=6pt] {$T_{2,1}$};

\draw[->] ([yshift=2pt]A.east) -- ([yshift=2pt]C.west);
\draw[->] ([yshift=-2pt]C.west) -- ([yshift=-2pt]A.east);
\path (A) -- (C) node[midway,above=5pt] {$T_{1,2}$};

\draw[->] ([xshift=1.5pt,yshift=1.2pt]C.south east) -- ([xshift=1.5pt,yshift=1.2pt]P.north west);
\draw[->] ([xshift=-1.5pt,yshift=-1.2pt]P.north west) -- ([xshift=-1.5pt,yshift=-1.2pt]C.south east);
\path (C) -- (P) node[pos=.14,right=6pt] {$T_{2,1}$};

\draw[->] ([xshift=1.5pt,yshift=-1.2pt]P.south west) -- ([xshift=1.5pt,yshift=-1.2pt]D.north east);
\draw[->] ([xshift=-1.5pt,yshift=1.2pt]D.north east) -- ([xshift=-1.5pt,yshift=1.2pt]P.south west);
\path (P) -- (D) node[pos=.1,below=8pt] {$T_{1,2}$};

\draw[->] ([yshift=-2pt]D.west) -- ([yshift=-2pt]B.east);
\draw[->] ([yshift=2pt]B.east) -- ([yshift=2pt]D.west);
\path (D) -- (B) node[midway,below=5pt] {$T_{2,1}$};

\draw[->] ([xshift=-1.5pt,yshift=-1.2pt]B.north west) -- ([xshift=-1.5pt,yshift=-1.2pt]I.south east);
\draw[->] ([xshift=1.5pt,yshift=1.2pt]I.south east) -- ([xshift=1.5pt,yshift=1.2pt]B.north west);
\path (B) -- (I) node[pos=.1,left=6pt] {$T_{1,2}$};

\end{tikzpicture}
\caption{The Cayley graph $C_G$ for $G=\GL(2,2)$ generated by $S=\mathcal T_2=\{T_{1,2},T_{2,1}\}$.}
\label{fig:cayley-gl22}
\end{figure}

The shortest word problem formulation has an immediate graph theoretic interpretation in terms of \emph{Cayley graphs}. Let $G=\langle S\rangle$ be a finite group generated by $S$. The Cayley graph of $G$ with respect to $S$ is the graph $C_G=(V,E)$ with
\begin{equation}
V=G,\qquad E=\{(g,h): g,h\in G,\ h=sg\text{ for some }s\in S\}.
\end{equation}
Thus, group elements form the vertices of $C_G$, while actions of the generators  correspond to the edges. The Cayley graph of $\GL(2,2)$ with respect to all transvections $\mathcal T_2$ is shown in Fig.~\ref{fig:cayley-gl22}.

For $g,h\in G$, a natural question is, what is the smallest number of actions that can transform one element to the other? In other words, what is the shortest path connecting them on the Cayley graph? This problem reduces to computing the distance between vertices in the corresponding Cayley graph.
For $g,h\in G$, define the distance between them as the length of the shortest path connecting them on the Cayley graph, and let $\delta_S(g,h)$ denote their distance in $C_G$. By definition,
\begin{equation}
\delta_S(g,h)=\ell_S(hg^{-1}),
\end{equation}
and, in particular, if $e$ denotes the identity element of $G$,
\begin{equation}
\delta_S(g):=\delta_S(e,g)=\ell_S(g).
\end{equation}

For CNOT synthesis, we take $G=\GL(n,2)$ and
\begin{equation}
S=\mathcal T_n=\{T_{i,j}\mid i,j\in[n],\ i\neq j\},
\end{equation}
where each elementary transvection $T_{i,j}$ corresponds to one CNOT gate. Since $T_{i,j}^{-1}=T_{i,j}$, $C_G$ can be viewed as an undirected graph. An invertible parity matrix $A$ is therefore a vertex of $C_G$, a CNOT gate corresponds to an edge, and a CNOT circuit corresponds to a path. Consequently, computing the distance $\delta_{\mathcal T_n}(A_1,A_2)$ on the Cayley graph $C_{\GL(n,2)}$ corresponds to the exact CNOT synthesis of $A_2A_1^{-1}$. The complexity of exact CNOT synthesis therefore directly translates to the complexity of computing distances in $C_G$.

\begin{corollary}
Let $G=\GL(n,2)$ and let $S=\mathcal T_{n}$ be the set of elementary transvections. Given $A,B\in G$ and $k\in\mathbb N$, deciding whether
\begin{equation}
\delta_S(A,B)\le k
\end{equation}
is \NP-complete. Consequently, computing the exact distance $\delta_S(A,B)$ in $C_G$ is \NP-hard.
\end{corollary}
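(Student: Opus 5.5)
The plan is to prove membership in \NP{} and \NP-hardness separately, both by transferring the corresponding facts for \MinCNOTv{} through the identity $\delta_S(A,B)=\ell_S(BA^{-1})$ established above.

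\emph{Membership.} A certificate is a word $s_\ell\cdots s_1$ over $\mathcal T_n$ with $\ell\le k$ and $s_\ell\cdots s_1A=B$. Each $s_i$ is a single row addition, so it can be applied to $A$ in $O(n)$ time, and the final equality check costs $O(n^2)$. We may assume $\ell=O(n^2/\log n)$. Indeed, $BA^{-1}\in\GL(n,2)$ and $\delta_S(A,B)=\ell_S(BA^{-1})$. By the Patel--Markov--Hayes bound used in the proof of Theorem~\ref{thm:main}, $BA^{-1}$ has a CNOT circuit of that length. Hence, whenever $\delta_S(A,B)\le k$, there is a witness of polynomial size. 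Equivalently, one can compute $BA^{-1}$ in polynomial time by Gaussian elimination and then invoke $\MinCNOTv\in\NP$.

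\emph{Hardness.} The reduction from \MinCNOTv{} is trivial: map an instance $(A,k)$ to $(I_n,A,k)$. Since $\delta_S(I_n,A)=\ell_{\mathcal T_n}(A)=\#\cnot(A)$, the two instances have the same answer. The map is computable in polynomial time, so Theorem~\ref{thm:main} (or Theorem~\ref{thm:main_strengthen}) gives \NP-hardness. Moreover, the restricted instances of Theorem~\ref{thm:main_strengthen} carry over unchanged.

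The only point that needs care is consistency of conventions. The Cayley edges are defined by left multiplication $h=sg$, which matches the row-addition action of CNOT gates in the word $s_\ell\cdots s_1$. Since each $T_{i,j}$ is an involution, the distance is symmetric and well defined in the undirected graph, and it equals $\ell_S(BA^{-1})$ exactly as stated. Finally, \NP-hardness of computing $\delta_S(A,B)$ itself follows because any algorithm computing it would decide the bounded version by comparing the output with $k$.
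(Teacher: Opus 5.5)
Your proposal is correct and takes the same route as the paper. The paper simply observes that $\delta_S(A,B)=\ell_S(BA^{-1})$, so that computing distances in $C_G$ is exact CNOT synthesis of $BA^{-1}$, and hardness transfers from \MinCNOTv{}; you spell out the \NP{} membership argument and the explicit reduction $(A,k)\mapsto(I_n,A,k)$ in more detail than the paper does.
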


\subsection{Sequential XOR Programs}

Another application of our hardness result concerns XOR-program optimization in linear cryptography. An \emph{XOR program} is a sequence of instructions that perform computation using only XOR operations. Since XOR is equivalent to addition over $\Ftwo$, every value produced by such a program is a linear combination of the input bits. Consequently, XOR programs provide a natural implementation model for binary linear transformations. In symmetric cryptography, such transformations appear prominently as linear layers, which are important building blocks in many linear cryptographic protocols. Their implementation cost is therefore closely related to the number of XOR operations required, and the optimization of XOR programs has thereby been extensively studied~\cite{jean2017optimizing,duval2018mds,xiang2020optimizing,yuan2024framework}.

Several XOR-program models have been considered, differing mainly in how intermediate values are stored and reused. Two particularly relevant models are general XOR (g-XOR) and sequential XOR (s-XOR) programs~\cite{xiang2020optimizing,yuan2024framework}, which are illustrated in Fig.~\ref{fig:g-s-xor}. In a g-XOR program, an instruction may create a fresh intermediate variable, for example
\begin{equation}
y_k\leftarrow z_i\oplus z_j,\qquad i\neq j,
\end{equation}
where $y_k$ is a fresh variable and each of $z_i,z_j$ is either an input variable $x_r$ or a previously created variable $y_s$; the resulting value can subsequently be reused in later instructions. In contrast, an s-XOR program uses only in-place updates of the form
\begin{equation}
x_i\leftarrow x_i\oplus x_j,\qquad i\neq j,
\end{equation}
so the result overwrites one of the existing variables and no additional intermediate variable is introduced. For an invertible linear layer $A\in\GL(n,2)$, an s-XOR program starts from the $n$ input variables and successively updates them until they realize the outputs specified by $A$. We denote the minimum number of such instructions required to implement the target linear layer by $\#\mathsf{sXOR}(A)$.

\begin{figure}[t]
  \centering
  \begin{subfigure}[t]{0.47\linewidth}
    \centering
    \begin{minipage}[c][4.5\baselineskip][c]{\linewidth}
    \centering
    {\large
    $\begin{array}{@{}c@{\qquad}c@{}}
      \begin{aligned}
        y_1&\leftarrow x_1\oplus x_2\\
        y_2&\leftarrow y_1\oplus x_3\\
        y_3&\leftarrow y_2\oplus x_4
      \end{aligned}
      &
      \begin{aligned}
        y_4&\leftarrow y_1\oplus x_5\\
        y_5&\leftarrow y_4\oplus x_6\\
        y_6&\leftarrow y_5\oplus x_7
      \end{aligned}
    \end{array}$}
    \end{minipage}
    \caption{g-XOR program}
    \label{fig:g-xor-program}
  \end{subfigure}
  \hfill
  \begin{subfigure}[t]{0.47\linewidth}
    \centering
    \begin{minipage}[c][4.5\baselineskip][c]{\linewidth}
    \centering
    {\large
    $\begin{array}{@{}c@{\qquad}c@{}}
      \begin{aligned}
        x_4&\leftarrow x_4\oplus x_1\\
        x_4&\leftarrow x_4\oplus x_2\\
        x_4&\leftarrow x_4\oplus x_3\\
        x_7&\leftarrow x_7\oplus x_1
      \end{aligned}
      &
      \begin{aligned}
        x_7&\leftarrow x_7\oplus x_2\\
        x_7&\leftarrow x_7\oplus x_5\\
        x_7&\leftarrow x_7\oplus x_6
      \end{aligned}
    \end{array}$}
    \end{minipage}
    \caption{s-XOR program}
    \label{fig:s-xor-program}
  \end{subfigure}
  \caption{Examples of g-XOR and s-XOR programs. Both XOR programs implement a  two-output linear map, which outputs $f_1(\v x)=x_1\oplus x_2\oplus x_3\oplus x_4$ and $f_2(\v x)=x_1\oplus x_2\oplus x_5\oplus x_6\oplus x_7$ on $y_3, y_6$ and $x_4,x_7$, respectively. The g-XOR program shares the intermediate value $x_1\oplus x_2$ and uses six instructions, whereas the s-XOR program updates the two target variables separately and uses seven.}
  \label{fig:g-s-xor}
\end{figure}

It is immediate that s-XOR programs and CNOT circuits are closely related. Each s-XOR instruction $x_i\leftarrow x_i\oplus x_j$ corresponds exactly to a CNOT gate with control $j$ and target $i$, and vice versa. Therefore, the correspondence is gate-by-gate: under the fixed-variable model, an s-XOR program is precisely the classical counterpart of a CNOT circuit. Consequently, for every $A\in\GL(n,2)$,
\begin{equation}
\#\mathsf{sXOR}(A)=\#\cnot(A).
\end{equation}

From the perspective of computational complexity, the minimization of g-XOR programs has been shown to be \NP-hard by Boyar, Matthews, and Peralta~\cite{boyar2013logic}; this result, however, exploits a model in which arbitrary intermediate variables can be introduced. The complexity of the more restrictive in-place s-XOR optimization problem was not resolved by that result; subsequent works instead developed exhaustive-search and heuristic methods for minimizing s-XOR count~\cite{jean2017optimizing,xiang2020optimizing,yuan2024framework}.
The correspondence between CNOT circuits and s-XOR programs then allows our hardness result to fill this gap
directly: the hardness of vanilla exact CNOT synthesis transfers immediately to
the optimization of fixed-width, in-place s-XOR programs.


\begin{corollary}
Given $A\in\GL(n,2)$ and $k\in\mathbb N$, deciding whether $A$ admits an s-XOR program implementation of length at most $k$ is \NP-complete. Consequently, computing $\#\mathsf{sXOR}(A)$ is \NP-hard.
\end{corollary}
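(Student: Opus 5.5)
The corollary follows directly from Theorem~\ref{thm:main} through the gate-by-gate identification of s-XOR programs with CNOT circuits described above. I would prove the two parts separately: membership in \NP, and \NP-hardness.

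For membership in \NP, the witness is an s-XOR program of length at most $k$. It can be verified by simulating each instruction $x_i\leftarrow x_i\oplus x_j$ as a row addition on an $n\times n$ matrix, which takes $O(n)$ time per instruction. One subtlety is that $k$ might be given exponentially large in binary. To keep the witness polynomial, I would invoke the bound of Patel, Markov, and Hayes: every $A\in\GL(n,2)$ has a CNOT implementation, and hence an s-XOR implementation, of size $O(n^2/\log n)$. So if $k$ exceeds this bound the instance is a yes-instance, and otherwise the witness has polynomial length. This is the same argument used in the proof of Theorem~\ref{thm:main}.

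For \NP-hardness, I would give the identity reduction from \MinCNOTv, mapping $(A,k)\mapsto(A,k)$. The key step is to formalize the bijection between the two models on $n$ fixed variables:
\begin{equation}
\cnot_{j\to i}\ \longleftrightarrow\ x_i\leftarrow x_i\oplus x_j,\qquad i\neq j.
\end{equation}
Both sides act on the parity matrix as the left multiplication by $T_{i,j}$. By induction on the length, a CNOT circuit and its translated s-XOR program have the same length and realize the same product of transvections. This yields $\#\mathsf{sXOR}(A)=\#\cnot(A)$, so $(A,k)\in\MinCNOTv$ if and only if $A$ admits an s-XOR program of length at most $k$. Theorem~\ref{thm:main} then gives \NP-hardness.

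The main obstacle is only definitional: making sure the s-XOR model is exactly the ancilla-free, all-to-all, fixed-labeled setting of \MinCNOTv. Concretely, this means no free variable relabelings at the end and the same output convention, namely that variable $i$ must end holding row $i$ of $A$. If some s-XOR conventions permit a free output permutation, I would either fix the convention to match or note that the instances in Theorem~\ref{thm:main_strengthen} are unitriangular.

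Finally, \NP-hardness of computing $\#\mathsf{sXOR}(A)$ follows because the decision version reduces to the optimization version: compute $\#\mathsf{sXOR}(A)$ and compare it with $k$.
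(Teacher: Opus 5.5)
Your proof is correct and follows the paper's route. The instruction-by-instruction identification of $x_i\leftarrow x_i\oplus x_j$ with a CNOT of control $j$ and target $i$ gives $\#\mathsf{sXOR}(A)=\#\cnot(A)$, so \NP-hardness transfers from Theorem~\ref{thm:main} via the identity reduction, and \NP{} membership follows from the same Patel--Markov--Hayes witness-size argument. I would drop your aside that the unitriangularity of the instances would handle a free-output-permutation convention: it is not justified as stated (the soundness argument would have to be redone), and it is unnecessary because the paper's s-XOR model is fixed-label.
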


\subsection{Exact Phase Polynomial Synthesis}\label{sec:phase_polynomial}

Phase polynomial circuits, consisting of CNOT and $R_z$ gates, constitute an important and widely studied class of quantum circuits. This important circuit class arises in many quantum algorithms, including variational algorithms such as VQE~\cite{vqe} and QAOA~\cite{qaoa}. Here, the $Z$ rotation gate $R_z(\theta)$ is defined as
\begin{equation}
  R_z(\theta)|x\>=e^{2\pi i\theta x}|x\>.
\end{equation} In this section, we show that our hardness result and proof strategy can also be extended to the exact synthesis of phase polynomial circuits.

Following Amy et al.~\cite{t-par,amy2018CNOTcomplexity}, a circuit over CNOT and $Z$ rotation gates can be described by a pair $(f,A)$, where $A\in\GL(n,2)$ specifies the target linear transformation on the computational basis states and $f(\v x)$ is the phase polynomial. In particular, the circuit is then required to implement the following unitary:
\begin{equation}
  U_{(f,A)}=\sum_{\v x\in\Ftwo^n} e^{2\pi i f(\v x)}|A\v x\>\<\v x|.
\end{equation}
This representation is also called \emph{sum-over-path} representation in the literature.
In particular, the phase polynomial $f$ is usually expressed in the following form:
\begin{equation}
f(\v x)=\sum_{\v a\in\Ftwo^n}\widehat f(\v a)\chi_{\v a}(\v x),\qquad \chi_{\v a}(\v x)=\bigoplus_{j=1}^{n}a_jx_j.
\end{equation}
where $\widehat f(\v a)$ are the phases and $\chi_{\v a}(\v x)$ are the parities associated with the phases. Thus, the synthesis problem contains both a linear target, specified by $A$, and a phase component, specified by the phases and parities appearing in $f(\v x)$. We denote the set of all the parities $\chi_{\v a}(\v x)$ that appear in $f$  by $\mathrm{supp}(f)$.

Synthesizing phase polynomial circuits is slightly more complicated than synthesizing CNOT circuits, as two types of parity constraints are required: besides realizing the prescribed final linear transformation $A$, the CNOT circuit skeleton must also reach all the intermediate parities $\mathrm{supp}(f)$ appearing in the phase polynomial $f(\v x)$. These intermediate parities serve as checkpoints: once a required parity appears on a qubit, the corresponding phase gate ($R_z$ gate) can then be inserted at the location to realize the associated phase term. Thus, phase polynomial synthesis can be viewed as a two-stage process: first constructing a CNOT circuit skeleton that reaches a set of prescribed parity checkpoints while ending at the prescribed linear transformation $A$, and then inserting phase gates at the parity checkpoints, where, the second stage is relatively easy and can be done within polynomial time.

Consequently, exact synthesis of CNOT circuits can be naturally reduced to the (CNOT-minimal) exact synthesis of phase polynomial circuits, and the latter problem is at least as hard as the former:

\begin{corollary}\label{corollary:f_trivial}
Exact synthesis of phase polynomial circuits is \NP-hard, even when the phase polynomial is trivial ($\mathrm{supp}(f)=\varnothing$).
\end{corollary}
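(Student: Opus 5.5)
The plan is a direct polynomial-time reduction from \MinCNOTv{} (Theorem~\ref{thm:main}) to the bounded decision version of CNOT-minimal exact phase polynomial synthesis. The latter asks: given $(f,A)$ and $k$, is there a circuit over CNOT and $R_z$ gates that implements $U_{(f,A)}$ with at most $k$ CNOT gates? Given an instance $(A,k)$ of \MinCNOTv, I map it to the instance $(f\equiv 0,A,k)$, so that $\mathrm{supp}(f)=\varnothing$. The map is clearly polynomial time. Hardness of the optimization version then follows exactly as in the Corollary following Theorem~\ref{thm:main}.

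For correctness, I need to show that the minimum CNOT count of a CNOT+$R_z$ circuit implementing $U_{(0,A)}$ equals $\#\cnot(A)$. One direction is immediate. Any CNOT circuit $C$ implementing $A$ is itself a phase polynomial circuit with no $R_z$ gates, and it realizes $U_{(0,A)}=\sum_{\v x}|A\v x\>\<\v x|$. Hence the phase-polynomial optimum is at most $\#\cnot(A)$.

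For the converse, take any CNOT+$R_z$ circuit implementing $U_{(0,A)}$ with $\ell$ CNOT gates. I will use the sum-over-path description of Amy et al.: such a circuit acts as $|\v x\>\mapsto e^{2\pi i g(\v x)}|B\v x\>$. Here $B$ is the product of its CNOT gates alone, because $R_z$ gates are diagonal and do not change the basis state. The function $g$ is the sum of the rotation angles times the parities present at their locations. Equating this with $U_{(0,A)}$ on computational basis states gives $B\v x=A\v x$ for all $\v x$, so $B=A$. Deleting all $R_z$ gates therefore yields a pure CNOT circuit with the same $\ell$ CNOT gates that implements $A$, which gives $\ell\ge\#\cnot(A)$. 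Consequently, $(A,k)\in\MinCNOTv$ if and only if the constructed phase polynomial instance is positive.

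The only point needing care is this converse step. I must make explicit that the linear part of a phase polynomial circuit depends only on its CNOT skeleton, since the $R_z$ gates are diagonal. I must also state that equality of unitaries fixes $A$ uniquely, which holds even up to a global phase because each basis state is mapped to a single basis state. Once the CNOT count is identified as the cost measure, the remaining steps are bookkeeping.
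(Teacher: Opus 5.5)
Your proposal is correct and is essentially the paper's argument. The paper also obtains this corollary by the trivial reduction $(A,k)\mapsto(f=0,A,k)$, viewing a phase polynomial circuit as a CNOT skeleton that must realize $A$, with $R_z$ gates inserted at parity checkpoints (of which there are none when $\mathrm{supp}(f)=\varnothing$); your observation that the diagonal $R_z$ gates can be deleted without changing the linear part or the CNOT count just spells out the step the paper leaves implicit.
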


Amy et al.~\cite{amy2018CNOTcomplexity} established hardness for a restricted phase polynomial synthesis setting through the \emph{fixed-target parity-network} model, where all CNOT gates must target a common qubit and all the required parity checkpoints are synthesized on that specific target qubit. In this fixed-target setting, the synthesis problem is essentially equivalent to a traveling salesman problem over the required parities on the target qubit, making the corresponding hardness proof relatively straightforward. Our result removes this restriction and establishes hardness for the general setting: exact synthesis of phase polynomial circuits is already \NP-hard, even when the phase polynomial is trivial, $f(\v x)=0$.

This reduction, however, deliberately places all of the hardness in the target linear transformation $A$ and does not exploit the intermediate parity requirements introduced by a nontrivial phase polynomial. A natural question is: do the checkpoint constraints themselves intrinsically contribute to the complexity of exact phase polynomial synthesis? One might ask, in the extremal case where the target linear transformation is fixed to the identity, can we still prove that exact phase polynomial synthesis is \NP-hard? The answer to this extremal condition complexity problem is \emph{yes}. We next show that by adapting the reduction strategy for our main result Theorem~\ref{thm:main}, we can prove the following hardness result within the same analysis framework.
\begin{theorem}\label{thm:checkpoint_identity}
  Exact synthesis of phase polynomial circuits is \NP-hard, even when the target linear transformation is trivial ($A=I_n$).
\end{theorem}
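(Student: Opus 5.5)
We reduce from \textsf{Grid-HC}, which is \NP-complete by Itai et al.~\cite{itai1982hamilton}. We reuse Step~1 of the proof of Theorem~\ref{thm:main}. Given a grid graph $G$ with $m$ vertices, we embed it isometrically into $Q_d$ by the unary encoding. We shift by the encoding of an arbitrary vertex $\v s$, so that $\v 0\in V$ and $Q_d[V]\cong G$ with $d\le m-1$.

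\textbf{Construction.} The instance uses only the path qubit (input $x_0$) and the $d$ coordinate qubits (inputs $x_1,\ldots,x_d$), so $n=d+1$ and $A=I_n$. No recorder qubits are needed, because the phase polynomial itself now imposes the checkpoint constraints. We take
\begin{equation}
  f(\v x)=\sum_{\v v\in V\setminus\{\v 0\}}\tfrac18\,\overline\chi_{\v v}(\v x),
\end{equation}
or any fixed nonzero phase per term. Each $\overline\chi_{\v v}$ with $\v v\neq\v 0$ contains $x_0$ and differs from every input parity $x_0,x_1,\ldots,x_d$. The CNOT budget is $k=m$.

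\textbf{Completeness.} Suppose $\v 0=\v w_1-\v w_2-\cdots-\v w_m-\v 0$ is a Hamiltonian cycle. We walk the path qubit along it, using one CNOT from the appropriate coordinate qubit per edge. This takes $m$ gates and visits every $\overline\chi_{\v w_i}$, where we insert the $R_z$ phases. The walk returns to $x_0$, so the final linear map is $I_n$.

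\textbf{Soundness.} Let $C$ use at most $m$ CNOTs, reach every parity in $\mathrm{supp}(f)$, and end at $I_n$. The counting mirrors facts (1)--(4) of the proof of Theorem~\ref{thm:main}. There are $m-1$ required parities, none present initially, and each gate creates at most one new parity on its target. Hence at least $m-1$ gates are \emph{first-generating} gates, each leaving a fresh $\overline\chi_{\v v}$ on its target. At most one further gate is left over.

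Every qubit must return to its input parity. A qubit last touched by a generating gate carries some $\overline\chi_{\v v}$, $\v v\ne\v 0$, which is not an input parity. So every touched qubit's last gate must be a non-generating gate. There is at most one non-generating gate, so exactly one qubit is ever targeted, there are exactly $m$ gates, and the last gate restores that qubit.

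Because the coordinate qubits are never targeted, they always carry $x_1,\ldots,x_d$. The target qubit must carry parities containing $x_0$, so it is the path qubit. Each gate therefore adds some $x_j$, i.e.\ traverses a hypercube edge. The sequence of parities is then a closed walk $\v 0\to\v w_2\to\cdots\to\v w_m\to\v 0$ of length $m$ through $m$ distinct vertices of $V$. By Eq.~\eqref{eq:hypercube_parity_adj} consecutive vertices are adjacent, so this walk is a Hamiltonian cycle of $Q_d[V]\cong G$.

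\textbf{Main obstacle.} The delicate step is ruling out the reverse of the above: a coordinate qubit being targeted, or the path qubit being used as a control. We must show that any coordinate qubit holding a non-input parity needs its own restoring gate, which would exceed the single spare gate. We must also check that parities generated on a coordinate qubit (for instance $x_0\oplus x_j$ created there) are also counted as generating gates. The counting above, based on the last gate on each touched qubit, handles both cases uniformly.

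We also need to handle degenerate instances. If $m$ is odd, or $G$ is disconnected, or $m\le 2$, we output a trivial negative instance.

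Finally, CNOT-minimal synthesis decides whether the optimum is at most $m$, so \textsf{Grid-HC} reduces to it. This gives \NP-hardness with $A=I_n$.
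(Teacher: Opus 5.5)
Your overall plan coincides with the paper's proof: a reduction from \textsf{Grid-HC}, using only the path and coordinate qubits, with $A=I_{d+1}$ and $k=m$. The paper's counting of ``$m-1$ first-generating gates plus one restoring gate per targeted qubit'' then forces a single target and exactly $m$ gates, as in your argument. Omitting $\overline\chi_{\v 0}$ from $f$ is harmless. However, two steps fail as written.

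\textbf{The phase choice.} Your soundness argument \emph{starts} from a circuit that reaches every parity in $\mathrm{supp}(f)$. That property must be derived from the circuit implementing $U_{(f,I)}$. Because phases are only defined modulo $1$, it is false for ``any fixed nonzero phase''. With $\theta=\tfrac12$ one has $e^{2\pi i f(\v x)}=(-1)^{\v c\cdot\v x}$, where $\v c=\bigoplus_{\v v\in V\setminus\{\v 0\}}(1|\v v)$. So $U_{(f,I)}$ is a tensor product of single-qubit $R_z(\tfrac12)$ gates, needs no CNOT at all, and every instance becomes a yes-instance.

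Dyadic angles such as $\tfrac18$ also admit nontrivial rewritings. For example, $\tfrac18\sum_{\v a\in U\setminus\{\v 0\}}\chi_{\v a}\equiv 0 \pmod 1$ for every $4$-dimensional subspace $U$. So your claim is at best unproven for $\tfrac18$.

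The paper uses the non-dyadic angle $\tfrac13$ and appeals to Proposition~2.7 of Amy et al. You can also argue directly. If the phase gates of the circuit produce $g$ with $g\equiv f$ modulo $1$ (up to a constant), then $h=f-g$ takes values in $\mathbb{Z}+\mathrm{const}$. Hence, for $\v a\neq\v 0$, its coefficient on $\chi_{\v a}$ is $-2^{1-n}\sum_{\v x}h(\v x)(-1)^{\v a\cdot\v x}$, which lies in $2^{1-n}\mathbb{Z}$. A never-reached $\overline\chi_{\v v}$ would instead leave coefficient $\tfrac13$ there, a contradiction.

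\textbf{Identifying the target qubit.} The inference ``the target qubit must carry parities containing $x_0$, so it is the path qubit'' does not follow. Your remark that the last-gate counting ``handles both cases uniformly'' is also not right: the counting only shows that exactly one qubit is targeted, not which one. If that qubit is a coordinate qubit $j$, the path qubit is never modified and can serve as a control, so qubit $j$ can acquire $x_0$ (e.g.\ $x_j\mapsto x_0\oplus x_j$).

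The paper rules this out by discarding graphs of minimum degree below $2$ and using two neighbours $\v e_p,\v e_q$ of the anchor. A coordinate target always keeps $x_j$, but one of $x_0\oplus x_p$, $x_0\oplus x_q$ lacks it.

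Within your preprocessing you can argue directly instead:
\begin{enumerate}
\item If qubit $j\ge 1$ is the sole target, all other qubits stay constant and gates $1,\dots,m-1$ are all generating.
\item Gate $1$ must produce $x_0\oplus x_j$, since this is the only required parity obtainable from $x_j$ by adding one input variable.
\item The last gate restores $x_j$ by adding one input variable, so gate $m-1$ must also have produced $x_0\oplus x_j$.
\item Since $m\ge 4$, gates $1$ and $m-1$ are distinct generating gates that produce the same parity, a contradiction.
\end{enumerate}
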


The proof is based on a reduction from the Hamiltonian \emph{cycle} problem on grid graphs, which we abbreviate as \textsf{Grid-HC}. Recall that \textsf{Grid-HC} was also  shown to be \NP-complete by Itai et al.~\cite[Theorem~2.1]{itai1982hamilton}. Our proof follows the strategy of Theorem~\ref{thm:main}, but only uses the path and coordinate qubits. Recorder qubits are unnecessary here; this simplification is possible because we can now adjust the specified phase polynomial to impose parity checkpoint constraints.

\begin{proof}
 Let $G=(V_G,E_G)$ be an input grid graph with $m=|V_G|$. If $\delta(G):=\min_{v_G\in V_G}\deg(v_G)<2$ or $|V_G|\le2$, the graph cannot contain a Hamiltonian cycle. In this case, the problem instance is trivial and can be immediately determined as negative without reduction.

 Otherwise, choose any vertex $\v s\in V_G$ as the anchor point. Then apply the same encoding scheme in Sec.~\ref{sec:reduction} to embed $G$ into a hypercube. In particular, we have
\begin{equation}
  V=\{\mathrm{enc}(\v v_G)\oplus\mathrm{enc}(\v s):\v v_G\in V_G\}
  \subseteq\Ftwo^d
\end{equation}
and the corresponding induced hypercube subgraph is $Q_d[V]$. Then we have $\v 0\in V$ and  $d\le m-1$. We use one path qubit and $d$ coordinate qubits with input variables $\v x=(x_0,x_1,\ldots,x_d)$. We then construct a corresponding phase polynomial synthesis problem instance as follows:
\begin{equation}
  f(\v x)=\sum_{\v v\in V}\frac13\overline\chi_{\v v}(\v x),\quad A=I_{d+1}, \quad k=m.\label{eq:phase_poly_reduction_ins}
\end{equation}
The bounded decision version of exact phase polynomial synthesis problem then asks \emph{whether there exists a $(d+1)$-qubit phase polynomial circuit that contains at most $k$ CNOT gates and implements $U_{(f,A)}$}. Here, the angles $\frac13$ are chosen to be nonzero and non-dyadic (cannot be written in the form $\frac{a}{2^b}$) so that all parities $\overline\chi_{\v v}(\v x)$ in $f$ must be reached, and cannot be skipped or eliminated via further optimization (see~\cite[Proposition 2.7]{amy2018CNOTcomplexity}). The construction is illustrated in Fig.~\ref{fig:phase_identity_reduction}. Next, we show that
\begin{multline}
   Q_d[V]\text{ admits a Hamiltonian cycle}
    \quad\Longleftrightarrow\\
    U_{(f,A)}\text{ can be implemented with }\mathrm{CNOT}+R_z\text{ circuit containing at most }k \text{ CNOT gates}.
\end{multline}

\begin{figure}[t]
  \centering
  \resizebox{\linewidth}{!}{%
  \begin{tikzpicture}[font=\footnotesize]
    \begin{yquant}[operator/separation=0.4cm,register/separation=0.3cm]
      qubit {$x_0$} path;

      qubit {$x_1$} c1;
      qubit {$x_2$} c2;
      qubit {\raisebox{1mm}{\(\vdots\)}} cdots;
      setstyle {draw=none} cdots;
      qubit {$x_d$} cd;

      [name=phasezero]
      box {$R_z(\frac13)$} path;

      [name=phasewalkone]
      cnot path | c1;

      [name=phasevtwo]
      box {$R_z(\frac13)$} path;

      [name=phasewalkd]
      cnot path | cd;

      [name=phasevthree]
      box {$R_z(\frac13)$} path;

      align path,c1,c2,cdots,cd;
      hspace {3.2cm} path,c1,c2,cdots,cd;

      [name=phasepathout]
      output {$x_0$} path;
      [name=phasec1out]
      output {$x_1$} c1;
      [name=phasec2out]
      output {$x_2$} c2;
      [name=phasecdotsout]
      output {\raisebox{1mm}{\(\vdots\)}} cdots;
      [name=phasecdout]
      output {$x_d$} cd;
    \end{yquant}

    \coordinate (phasecircuitwest) at (current bounding box.west);
    \coordinate (phasebracecolumn) at ([xshift=0mm]phasecircuitwest);
    \coordinate (phaseannotationright) at ([xshift=-1.5mm]phasebracecolumn);
    \path (phasewalkone-p) -- (phasewalkd-p)
      coordinate[midway] (phasecoordinatemid);

    \node[anchor=east,text width=25mm,inner sep=0pt,
          align=center,font=\normalsize]
      at ([xshift=3.5mm]phaseannotationright |- phasevtwo)
      {Path Qubit};

    \draw[decorate,decoration={brace,mirror,amplitude=4pt}]
      (phasebracecolumn |- phasewalkone-p)
      -- (phasebracecolumn |- phasewalkd-p);
    \node[anchor=east,text width=25mm,inner sep=0pt,
          align=center,font=\normalsize]
      at ([xshift=3.5mm]phaseannotationright |- phasecoordinatemid)
      {Coordinate\\Qubits};

    \node[anchor=south,align=center]
      at ([yshift=0mm]phasezero.north)
      {$\overline\chi_{\v 0}(\v x)$};
    \node[anchor=south,align=center]
      at ([yshift=0mm]phasevtwo.north)
      {$\overline\chi_{\v v_2}(\v x)$};
    \node[anchor=south,align=center]
      at ([yshift=0mm]phasevthree.north)
      {$\overline\chi_{\v v_3}(\v x)$};

    \node[inner sep=0pt]
      at (phasecdotsout -| phasewalkone-p) {\raisebox{1mm}{\(\vdots\)}};
    \path (phasevthree.east) -- (phasepathout.west)
      coordinate[midway] (phasemoregates);
    \node[inner sep=0pt,fill=white]
      at (phasemoregates) {$\cdots$};
    \node[inner sep=0pt,fill=white]
      at (phasec1out -| phasemoregates) {$\cdots$};
    \node[inner sep=0pt,fill=white]
      at (phasec2out -| phasemoregates) {$\cdots$};
    \node[inner sep=0pt,fill=white]
      at (phasecdotsout -| phasemoregates) {\raisebox{1mm}{\(\vdots\)}};
    \node[inner sep=0pt,fill=white]
      at (phasecdout -| phasemoregates) {$\cdots$};
  \end{tikzpicture}%
  }
  \caption{Demonstration of the reduction from hypercube Hamiltonian cycle to exact phase polynomial synthesis with $A=I_{d+1}$. For illustration, the first two nontrivial parity checkpoints are set to $\overline \chi_{\v w_2}(\v x)=x_0\oplus x_1$ and $\overline \chi_{\v w_3}(\v x)=x_0\oplus x_1\oplus x_d$. Each $R_z(\frac13)$ gate is applied when a required parity appears on the path qubit. The omitted gates complete the cycle and restore the parity on the path qubit to $x_0$.}
  \Description{One path qubit and a bundle of coordinate qubits. Phase rotations are applied to the path qubit at successive required parities, with coordinate-controlled CNOT gates between them. The omitted continuation returns every qubit to its input parity.}
  \label{fig:phase_identity_reduction}
\end{figure}

\paragraph{\textbf{Completeness} ($\Rightarrow$): Constructing a phase polynomial circuit from a Hamiltonian cycle} Suppose $Q_d[V]$ has a Hamiltonian cycle
\begin{equation}
  \v w_1(=\v 0)-\v w_2-\cdots-\v w_m-\v w_{1}(=\v 0).
\end{equation}
For every consecutive pair, the Hamiltonian cycle edge implies that they differ in exactly one coordinate, say $\v w_i\oplus\v w_{i+1}=\v e_\ell$. A CNOT controlled by coordinate qubit $\ell$ and targeting the path qubit then updates
\begin{equation}
  \overline\chi_{\v w_i}(\v x) \longmapsto \overline\chi_{\v w_i}(\v x)\oplus x_\ell=\overline\chi_{\v w_{i+1}}(\v x)
\end{equation}
by Eq.~\eqref{eq:hypercube_parity_adj}. The $m$ edges therefore give exactly $m=k$ CNOT gates. Among the circuit, all checkpoints are reached. At the end of the circuit, the path qubit returns to $x_0$ and every coordinate qubit remains unchanged, implying that the final parity matrix is $I_{d+1}$. As a last step, insert an $R_z(\frac13)$ gate at every location that parity $\overline\chi_{\v v}(\v x)$ appears, then we have derived a phase polynomial circuit that implements the sum-over-path specification $(f,A)$ in~\eqref{eq:phase_poly_reduction_ins}.

\paragraph{\textbf{Soundness} ($\Leftarrow$): Extracting a Hamiltonian cycle from a short phase polynomial circuit.} Conversely, consider an arbitrary CNOT+$R_z$ circuit that contains $L:=|C|\le m$ CNOT gates and satisfies the sum-over-path specification~\eqref{eq:phase_poly_reduction_ins}. We do not assume that the circuit admits the path and coordinate qubits distinction;  instead, we show that the tight gate budget forces every such circuit to exhibit this structure. Throughout the subsequent analysis, we remove all the $R_z$ gates, and leave only the CNOT circuit skeleton.

First, we show that \emph{all CNOT gates must target the same qubit}. To begin with, Proposition~2.7 of~\cite{amy2018CNOTcomplexity} ensures that the CNOT skeleton reaches every parity $\overline\chi_{\v v}(\v x)$ with $\v v\in V$. Except for $\overline\chi_{\v 0}(\v x)=x_0$, none of these parities is present initially; so generating the remaining $m-1$ parities requires at least $m-1$ CNOT gates.

Let $b$ be the number of qubits targeted by at least one CNOT gate. For each such qubit $j$ ($0\le j\le d$), consider the last gate targeting it. Since the final linear transformation is $A=I_{d+1}$, this gate must restore its original input variable $x_j$ on qubit $j$. None of these $b$ gates can be among the $m-1$ gates counted above, as they leave the original variable $x_j$ and do not generate any new parity $\overline \chi_{\v v}(\v x)$. Hence
\begin{equation}
  m-1+b\le L\le m.
  \label{eq:phase_identity_accounting}
\end{equation}
Apparently $b\ge1$, so it follows that $b=1$ and $L=m$. Thus every CNOT gate targets the same qubit.

We next show that \emph{this unique target qubit must be the path qubit}. The anchor vertex $\v s$ has at least two distinct neighbors, say $\v e_p,\v e_q\in V$. Then $x_0\oplus x_p$ and $x_0\oplus x_q$ both lie in the support of $f$. Suppose that the unique target were coordinate qubit $j>0$, so other qubits are never targeted by CNOT and remain unchanged. As a result, the coefficient of $x_j$ on the target qubit remains one throughout the circuit. However, during the traversal of all parities $\overline\chi_{\v v}(\v x)$ on the target qubit, at least one of $x_0\oplus x_p$ and $x_0\oplus x_q$ has coefficient zero on $x_j$, resulting in a contradiction. Therefore the unique target must be the path qubit.

To summarize, all CNOT gates therefore target the path qubit and are controlled by coordinate qubits. Of the $m$ CNOT gates, $m-1$ gates generate the $m-1$ non-initial parities, and the final gate restores $x_0$ on the path qubit. In their order of occurrence on the path qubit, these parities form a closed sequence in which consecutive terms differ by one coordinate variable. By Eq.~\eqref{eq:hypercube_parity_adj}, the corresponding vertices form a cycle that visits every vertex of $V$ exactly once before returning to $\v 0$. This is a Hamiltonian cycle in $Q_d[V]$, and hence there is a Hamiltonian cycle in $G$.

Now we have established that the bounded decision version of phase polynomial circuit synthesis is \NP-hard\footnote{The bounded decision problem is actually \NP-complete, as its instances can be efficiently verified via a short phase polynomial circuit witness. But it suffices to show the decision problem \NP-hard in order to prove the \NP-hardness of the corresponding synthesis problem.}. As a corollary, the exact \emph{synthesis} problem, which, given a sum-over-path specification $(f,A)$, asks for a circuit implementing $U_{(f,A)}$ with the minimum possible number of CNOT gates, is also \NP-hard, even when $A$ is the identity.
\end{proof}

Theorem~\ref{thm:checkpoint_identity} shows that intermediate checkpoint constraints alone can make CNOT synthesis intractable, even when the target linear transformation is trivial. This complements Corollary~\ref{corollary:f_trivial}  which sets $f=0$ and places the hardness entirely in the final matrix $A$. The two arguments identify separate sources of difficulty in exact phase polynomial synthesis: both the target linear transformation constraint and the phase polynomial constraint alone is enough to make the exact synthesis problem intractable. This result also highlights the broader applicability of our reduction framework: the same approach based on Hamiltonian  path/cycle problems, isomorphic embeddings and hypercubes extends beyond vanilla exact CNOT synthesis to related circuit synthesis problems.

From a technical perspective, our proof recovers the fixed-target circuit structure underlying hardness result by Amy et al.~\cite{amy2018CNOTcomplexity}: all CNOT gates must target the same qubit. Amy et al. added this condition in the problem definition to derive the hardness result, whereas our proof does not require the assumption, but enforces the fixed-target circuit structure with the carefully designed tight gate count. Thereby we have dropped the fixed-target condition of circuits in proving the hardness of exact phase polynomial synthesis.

\section{Related Works}

\paragraph{Computational complexity of circuit synthesis and optimization} Amy et al.~\cite{amy2018CNOTcomplexity} proved \NP-completeness for fixed-target parity networks and parity networks with encoded inputs, while Kang and Ma~\cite{kang2023cnot} established \NP-hardness for ancilla-free CNOT synthesis under restricted topology. Jiang et al.~\cite{jiang2020optimal} further proved strong inapproximability for topology-constrained CNOT synthesis with clean ancillas and for local size minimization: approximating either problem within any constant factor is \NP-hard.

For the Clifford+$T$ gate set, van de Wetering and Amy~\cite{vandewetering2024optimising} proved that minimizing the $T$ count or depth and the CNOT or Hadamard count of Clifford+$T$ circuits is \NP-hard under polynomial-time Turing reductions; their argument also covers Toffoli-count minimization in classical reversible circuits.

Classical logic synthesis likewise contains many hard circuit-minimization problems, echoing our result. Boyar et al.~\cite{boyar2013logic} proved \NP-hardness of minimizing general XOR straight-line programs, while Parberry showed that sorting-network verification is \textsf{coNP}-complete even near optimal depth~\cite{parberry1991sorting}. Correspondingly, SAT is widely used for exact logic and sorting-network synthesis~\cite{haaswijk2020SATBased,bundala2017sorting}.

\paragraph{Synthesis and optimization of CNOT and phase polynomial circuits} CNOT and phase polynomial synthesis have been studied extensively; representative elimination, greedy, and heuristic methods were summarized in the introduction~\cite{pmh_optimal_linear_synthesis,debrugiere2021Gaussian,amy2018CNOTcomplexity,chen2025phasepoly}. Shaik and van de Pol encoded exact CNOT synthesis into SAT, classical planning, and quantified Boolean formula (QBF) instances and solved them with existing solvers; the SAT-based method was the most effective overall~\cite{shaik2024Optimal}. Qiskit-SAT similarly encodes exact CNOT synthesis as SAT and uses Z3 to solve the resulting instances~\cite{qiskit-sat,Z3}.

More recently, methods for exact CNOT synthesis other than SAT have emerged. Webster et al.~\cite{CNOT_opt_UCL} studied heuristic and exact synthesis for CNOT and Clifford circuits; their exact synthesis uses pre-computed databases and does not scale at runtime. Lin-search instead scales exact CNOT synthesis through hybrid iterative deepening search~\cite{li2026linsearch}. Li et al.~\cite{li2026parallelizable} developed a parallel STP-based framework for the exact synthesis of both CNOT and phase polynomial circuits.

HOPPS subsequently introduced, to our knowledge, the first dedicated hardware-aware exact phase polynomial synthesizer for arbitrary device topologies~\cite{li2025HOPPS}. The \NP-hardness result in Section~\ref{sec:phase_polynomial} provides a complexity-theoretic foundation for these exact methods, even without topology constraints or a nontrivial phase polynomial.

\paragraph{Concurrent work} While validating the proof and preparing this manuscript, we became aware of independent concurrent work by Acuaviva et al.~\cite{acuaviva2026cnot}, which independently proves \NP-completeness of all-to-all, fixed-label CNOT distance. Their reduction is from \textsf{VertexCover}, following a source problem also used in earlier CNOT and XOR-program hardness proofs~\cite{kang2023cnot,boyar2013logic}, whereas ours introduces a distinct reduction from \GridHP{} bridged from an isometric hypercube embedding.

The two reductions expose complementary sources of difficulty. In the construction of Acuaviva et al., vertex selection represents which intermediate parities should be computed and shared; their lower bound extracts this structure from an unrestricted circuit through XOR-DAG projection and contraction. Our proof instead exposes the difficulty of ordering intermediate parities: recorder qubits and a rank argument force a single trajectory through the required hypercube vertices.


\section{Conclusion}

In this work, we established the \NP-completeness of vanilla exact CNOT synthesis: given an invertible parity matrix $A\in\GL(n,2)$ and a gate bound $k$, deciding whether $A$ can be implemented by at most $k$ CNOT gates is \NP-complete even with all-to-all connectivity, no ancillas, fixed labeled qubits, and identity input. Our reduction proceeds from \GridHP{} through an induced hypercube representation, exploiting the direct correspondence between hypercube moves and parity updates on CNOT circuits. The main difficulty is that CNOT synthesis constrains only the final parity matrix, whereas a Hamiltonian path imposes requirements on the intermediate states. Recorder qubits, replication, and a tight gate budget bridge this gap and force every sufficiently short implementation to expose the required Hamiltonian path structure. The result therefore shows that the combinatorial complexity of in-place linear reversible synthesis alone is sufficient to make exact CNOT optimization intractable, without relying on topology restrictions, ancillary workspace, or encoded inputs.

Because the vanilla problem has a particularly simple algebraic formulation, the result also transfers directly to several different but important optimization problems. In particular, it establishes \NP-completeness of (1) the shortest word problem over $\GL(n,2)$ generated by elementary transvections, (2) the distance problem on Cayley graphs, (3) the \NP-hardness of optimal fixed-width s-XOR programs and (4) exact synthesis of phase polynomial circuits. For phase polynomial circuits, our hardness results reveal two complementary yet independently sufficient sources of hardness: both the target linear transformation constraint and the phase polynomial constraint contribute to the hardness of exact phase polynomial synthesis, and either constraint alone is enough make the problem \NP-hard.

\section*{AI Tool Usage Statement}

The authors used GPT-5.6 in the development of this work. All mathematical proofs, analysis, citations and the final text were reviewed and validated by the authors, who take full responsibility for the final work.

%

\bibliographystyle{ACM-Reference-Format}
\bibliography{libpaper}

\end{document}